\newcommand\email[2][]%
   {\newaffiltrue\let\AB@blk@and\AB@pand
      \if\relax#1\relax\def\AB@note{\AB@thenote}\else\def\AB@note{\relax}%
        \setcounter{Maxaffil}{0}\fi
      \begingroup
        \let\protect\@unexpandable@protect
        \def\thanks{\protect\thanks}\def\footnote{\protect\footnote}%
        \@temptokena=\expandafter{\AB@authors}%
        {\def\\{\protect\\\protect\Affilfont}\xdef\AB@temp{#2}}%
         \xdef\AB@authors{\the\@temptokena\AB@las\AB@au@str
         \protect\\[\affilsep]\protect\Affilfont\AB@temp}%
         \gdef\AB@las{}\gdef\AB@au@str{}%
        {\def\\{, \ignorespaces}\xdef\AB@temp{#2}}%
        \@temptokena=\expandafter{\AB@affillist}%
        \xdef\AB@affillist{\the\@temptokena \AB@affilsep
          \AB@affilnote{}\protect\Affilfont\AB@temp}%
      \endgroup
       \let\AB@affilsep\AB@affilsepx
}
\renewcommand\Affilfont{\normalsize}
\newcommand{\supp}{\mathrm{supp}}
\DeclareMathOperator*{\EX}{\mathbb{E}}
\newtheorem{theorem}{Theorem}[section]
\newtheorem{lemma}[theorem]{Lemma}
\newtheorem{corollary}[theorem]{Corollary}
\newtheorem{definition}[theorem]{Definition}
\DeclareMathOperator*{\argmax}{arg\,max}
\renewcommand{\vec}[1]{\mathbf{#1}}
\NewDocumentEnvironment{algoindent}{m}{%
  \algocf@push{#1}\hbox\bgroup\vtop\bgroup
  \algocf@addskiptotal
}{%
  \egroup\egroup
  \algocf@pop{#1}%
}
\newenvironment{proofof}[1]
{\smallskip\noindent{\emph{#1.}}}
{\hfill$\Box$\medskip}
\NewDocumentEnvironment{algolabel}{m o}{%
  \setbox0\hbox{\underline{#1} : }
  \dimen0=\wd0
  \begingroup
    \leavevmode\box0 \IfValueT{#2}{#2\\}%
  \endgroup
  \expandafter\algoindent\expandafter{\the\dimen0}
}{\endalgoindent}
\newcommand{\crefdeffpart}[2]{%
  \hyperref[#2]{\namecref{#1}~\labelcref*{#1}\Cref*{#2}}%
}
\newcommand{\Crefdeffpart}[2]{%
  \hyperref[#2]{\nameCref{#1}~\labelcref*{#1}\Cref*{#2}}%
}
\let\oldnl\nl
\newcommand{\nonl}{\renewcommand{\nl}{\let\nl\oldnl}}
\newenvironment{mechanism}[1][htb]{%
    
   \begin{algorithm}[#1]%
  }{\end{algorithm}}
\Crefname{algorithm}{Mechanism}{Mechanisms}
\DeclarePairedDelimiter{\ceil}{\lceil}{\rceil}
\DeclarePairedDelimiter{\floor}{\lfloor}{\rfloor}
\renewcommand{\vec}[1]{\mathbf{#1}}
\begin{document}

\title{Online Budget-Feasible Mechanism Design with Predictions}

\author[1,3]{Georgios Amanatidis}
\author[1,3,5]{Evangelos Markakis}
\author[1,3]{Christodoulos Santorinaios}
\author[2,4]{\\Guido Sch\"afer}
\author[1]{Panagiotis Tsamopoulos}
\author[6]{Artem Tsikiridis}

\affil[1]{Athens University of Economics and Business, Greece}
\affil[2]{Centrum Wiskunde \& Informatica (CWI), The Netherlands}
\affil[3]{Archimedes, Athena Research Center, Greece}
\affil[4]{University of Amsterdam, The Netherlands}
\affil[5]{Input Output Global (IOG), Greece}
\affil[6]{Technical University of Munich, Germany}

\maketitle

\begin{abstract}
\noindent Augmenting the input of algorithms with predictions is an algorithm design paradigm that suggests leveraging a (possibly erroneous) prediction to improve worst-case performance guarantees when the prediction is perfect (consistency), while also providing a performance guarantee when the prediction fails (robustness). Recently, \citet{xu2022mechanism} and \citet{agrawal2022learning} proposed to consider settings with strategic agents under this framework. 
   In this paper, we initiate the study of budget-feasible mechanism design with predictions. These mechanisms model a procurement auction scenario in which an auctioneer (buyer) with a strict budget constraint seeks to purchase goods or services from a set of strategic agents, so as to maximize her own valuation function. We focus on the \textit{online} version of the problem where the arrival order of agents is random. We design mechanisms that are truthful, budget-feasible, and achieve a significantly improved competitive ratio for both monotone and non-monotone submodular valuation functions compared to their state-of-the-art counterparts without predictions. Our results assume access to a prediction for the value of the optimal solution to the offline problem. 
   We complement our positive results by showing that for the offline version of the problem, access to predictions is mostly ineffective in improving approximation guarantees. 
\end{abstract}

\thispagestyle{empty}

\section{Introduction}
Our work revolves around the design of auction mechanisms for procurement auctions. Procurement auctions, also known as reverse auctions, are mechanisms for assigning a set of projects or tasks to a set of candidate providers. In the more traditional applications of reverse auctions, the auctioneer may be a governmental organization (e.g., for assigning a public project to a construction firm), or a company interested in outsourcing tasks, such as the supply of logistics or retail services; see \citet{CSS06}. In  recent years, the explosion and monetization of the Internet have created even further  applications and platforms where even a single individual can conduct online assignments/auctions. Two such prominent families of examples include platforms for online labour markets, such as Upwork or Freelancer, and platforms for crowdsourcing systems, where the posted tasks can range from very elementary assignments as image labeling, all the way to more specialized services like translating documents. Participatory crowdsensing also falls under this framework, where some entities are interested in collecting sensor information from smartphones or other devices. 

The emergence of such online markets has caught the increasing attention of the research community, which has focused on properly formulating and addressing relevant research questions. One of the popular models in the literature for capturing procurement auctions is that of budget-feasible mechanism design, originally proposed in the seminal paper of \citet{singer2010budget}. 
This model concerns a scenario where an auctioneer with a strict budget constraint seeks to purchase goods or services from a set of strategic agents, who may misreport their cost to their advantage for obtaining higher payments. Under these considerations, a natural goal for the auctioneer is to come up with a truthful mechanism for hiring a subset of the agents that maximizes the procured value (measured by the valuation function of the auctioneer), and such that the total payments to the agents respect the budget limitation. 
Given that even the non-strategic versions of such budget-constrained problems tend to be NP-hard, the main focus is on providing budget-feasible mechanisms that achieve approximation guarantees on the auctioneer's optimal potential value. This has led to a steady stream of works which we outline in our related work section.

The focus of our work is to study budget-feasible mechanisms under the rather recently introduced framework of learning-augmented mechanism design. This forms a new paradigm in designing algorithms, where the input is augmented by a prediction on some relevant parameter of the problem, possibly obtained by past data. The main goal is to leverage a (possibly erroneous) prediction to improve worst-case performance guarantees when the prediction is accurate (referred to as consistency), while also providing a performance guarantee when it is not (referred to as robustness),  without knowing  the prediction accuracy in advance. Recently, 
\citet{xu2022mechanism} and \citet{agrawal2022learning}
proposed to consider settings with strategic agents through this lens. Motivated by these two initial attempts, there has been a fast growing interest and a series of other works that studied the design of truthful mechanisms with predictions in a variety of settings, leading mostly to positive results (i.e.,  improved approximation guarantees under correct predictions while not far off from the best known guarantee otherwise).

\paragraph{Our Contribution}
We initiate the study of budget-feasible mechanism design with predictions. 
We focus on the online version of the problem in the random order model, where the agents arrive in a uniformly random order, and under a  submodular valuation function for the auctioneer. 
Further, the input of our mechanisms is augmented by a prediction for the value of the optimal solution to the offline problem. Our main results can be summarized as follows:

\begin{itemize}[itemsep=1pt, topsep=1pt, leftmargin=10pt]

    \item In \Cref{sec:mono-submodular} we propose a family of universally truthful, budget-feasible mechanisms (\Cref{mechanism:mono-sm-conv}; parameterized by an error tolerance parameter $\tau$) for monotone submodular objectives which attain $O(1)$ consistency and robustness. For $\tau$ close to $1$, the consistency, i.e., the approximation guarantee when the prediction is perfect, can be as low as $6$, which is a significant improvement even versus the best known approximation guarantee of $54.4$ for Submodular Knapsack Secretary, i.e., the non-strategic version of the problem, without predictions (due to \citet{feldman11}). 
    
    \item For smaller values of $\tau$, the robustness of \Cref{mechanism:mono-sm-conv} in \Cref{sec:mono-submodular}, i.e., the approximation guarantee when the prediction is arbitrarily bad, can be as low as $146$. This also implies a $146$-approximation mechanism \textit{without predictions}, improving the previously known approximation of $1710$ by \citet{amanatidis2022budget} by a factor greater than $11$!

    \item In \Cref{sec:mono-sm-sampling}, we present a different mechanism (\Cref{mechanism:mono-sm2}) 
    where the prediction implicitly determines the length of the initial agent-sampling window (although here we have fixed the latter for the sake of presentation). Compared to \Cref{mechanism:mono-sm-conv}, this approach is more conservative, as \Cref{mechanism:mono-sm2} is, in a sense, less sensitive to (very) erroneous predictions. In particular, we show that our mechanism achieves a consistency-robustness tradeoff of~$95$ and~$280$.

    \item We obtain analogous results for non-monotone submodular objectives, albeit with some loss on approximation, via \Cref{mechanism:nonmono-sm-conv} and \Cref{mechanism:non-mono-sm2}. Specifically, for $\tau$ close to $1$, the consistency of \Cref{mechanism:nonmono-sm-conv} can be as low as $19$, whereas, for smaller values of $\tau$, its robustness can be as low as $445$. We emphasize once more that, even when one completely ignores the prediction, our analysis leads to a large improvement on the current state-of-the-art approximation due to \citet{amanatidis2022budget}, which is $1710$ even for Submodular Knapsack Secretary (the non-strategic version of the problem). Finally, we show that the result of the analogue of \Cref{mechanism:mono-sm2} for non-monotone submodular valuations (\Cref{mechanism:non-mono-sm2}) achieves a consistency-robustness tradeoff of  $228$ and $818$.

    \item We complement our positive results by showing in \Cref{sec:lower-bound} that for the \textit{offline} version, access to predictions is mostly ineffective in improving approximation guarantees. In particular, we show that no randomized, universally truthful, budget-feasible mechanism with bounded robustness can achieve a consistency guarantee better than the lower bound of $2$ that is known to hold for additive valuations and randomized mechanisms \citep{chen2011approximability}. 
\end{itemize}

All our mechanisms run in polynomial time assuming access to a value oracle for the objective function, which is standard in submodular optimization. From a technical point of view, we design posted price auction mechanisms, the prices of which are determined by exploiting the information from the
prediction on the optimal value, the outcome of an initial
sampling phase, or a combination of the two, depending on the mechanism. Overall, our results shed light on the power but also on the limitations of the learning-augmented paradigm for budget-feasible mechanisms.

\paragraph{Related Work}
The design of truthful budget-feasible mechanisms was initiated by \citet{singer2010budget} for additive and monotone submodular valuation functions, and has sparked a rich line of work. For additive valuation functions a best-possible $2$-approximation randomized mechanism was given by \citet{gravin2020optimal}, whereas it is also known that the best possible factor of any deterministic mechanism is between $1+\sqrt{2}$ and $3$ (due to \citet{chen2011approximability} and \citet{gravin2020optimal}). For monotone submodular valuation functions, through a series of improvements  (see e.g., \citet{chen2011approximability,jalaly2021simple,balkanski2022deterministic}), the current state of the art is $4.45$ for deterministic  and $4.08$ for randomized mechanisms due to \citet{han2024triple} who follow the clock auction paradigm. 
Without the monotonicity assumption, obtaining any $O(1)$ approximation seemed to be more challenging (see e.g., \citet{amanatidis2022budget}), but here as well the recent improvements of \citet{balkanski2022deterministic,huang2023randomized} have led to a relatively small approximation ratio, namely $11.67$, again by  \citet{han2024triple}.

There have also been works focusing on richer valuation classes like XOS and subadditive functions (see e.g. \citet{bei2017worst,amanatidis2017budget,neogi_et_al:LIPIcs.ITCS.2024.84,loglogn}), exploring beyond-the-worst-case scenarios (\citet{rubinstein23}), or using large market assumptions, see \citet{AnariGN18,jalaly2021simple}. Very recently the problem has been considered in a multi-parameter domain by \citet{neogi25multidimensional}.
We refer the reader to the recent survey of \citet{liu24} for an overview of various settings (e.g., environments with combinatorial constraints and more general single-parameter domains).

A recent direction in theoretical computer science is the area of ``beyond the worst-case analysis'' (see, e.g., \citet{roughgarden2021beyond}). One prominent approach within this framework is \textit{algorithms with predictions}, which originated in the field of online algorithms (see, e.g., \citet{lykouris2021competitive}) and utilizes predictions, e.g.,  provided by an ML algorithm trained on historical data, to overcome worst case lower bounds.
The idea has been also applied in strategic settings, where predictions are used to simulate (an aggregation of) the agents' private information, starting with  the work of \citet{agrawal2022learning} and \citet{xu2022mechanism}. Areas of application in mechanism design include auction-related environments (see e.g.,  \citet{prasad23,balkanski23,christodoulou24,caragiannis24,balkanski23scheduling,gkatzelis25}), online mechanisms (see e.g., \citet{balkanski23,lu24}), mechanisms without money (see e.g. \citet{balkanski24randomized,colini24,barak24,cohen24}) and computational social choice (see e.g. \citet{filos2025}). For an almost exhaustive list of papers in the area, see the online repository of  \citet{algorithms_with_predictions}.%

The \textit{online} budget-feasible mechanism design problem we study was introduced by \citet{badanidiyuru2012learning} and is a generalization to the strategic setting of the Submodular Knapsack Secretary problem (see, e.g., \citet{babaioff17}). The latter has also been considered by \citet{bateni13} and subsequently by \citet{feldman11} and \citet{KesselheimT17}. Very recently, \citet{charalampopoulos2025} devised a truthful mechanism which achieves  a constant approximation (for a very large constant) for \emph{posted pricing} and monotone submodular valuations. The state-of-the-art approximation for monotone or non-monotone submodular valuation functions for the strategic version we study here is due to \citet{amanatidis2022budget}, who achieve a competitive ratio of $1710$. Note that the Secretary problem (additive objective and cardinality constraint) has been studied in environments with predictions (see, e.g., \citet{antoniadis23,perchet23,fujii24,balkanski24fair}), yet our work is the first to study the online budget-feasible mechanism design problem---and its algorithmic counterpart---in such augmented environments.

\section{Preliminaries}
\label{sec:prelims}
We explore online mechanism design with predictions, focusing on a procurement auction setting involving a single auctioneer and multiple agents. The auctioneer has a valuation function $v : 2^N \rightarrow \mathbb{Q}_{\geq 0}$ and a budget $B>0$. We use $N = [n] = \{1, 2, \dots , n\}$ to denote the set of $n$ agents. Each agent $i$ has a \emph{private} cost $c_i> 0$, namely the cost of getting hired by the auctioneer. For $S \subseteq N$, $v(S)$ represents the value the auctioneer derives from selecting that set; for singletons, we write $v(i)$ instead of $v(\{i\})$. The algorithmic goal in all the problems we study is to select a set $S$ that maximizes $v(S)$ subject to the constraint $\sum_{i \in S} c_i \le B$. We assume oracle access to $v$ via value queries, implying the existence of a polynomial-time value oracle that returns $v(S)$ when queried with a set $S$.
A function $v$ is non-decreasing (referred to as \textit{monotone} herein), if $v(S) \leq v(T)$ for any $S \subseteq T \subseteq N$.
We consider the cases of monotone and non-monotone, normalised (i.e., $v(\emptyset) = 0$), non-negative
\emph{submodular} valuation functions. Since marginal values are extensively used, we adopt the shortcut $v(i | S)$ for the marginal value of agent $i$ with respect to the set $S$, i.e., $v(i | S) := v(S \cup \{i\}) - v(S)$.
\begin{definition}\label{def:1}
    A function $v$, defined on $2^N$ for some set $N$, is submodular if $v(i|S)\geq v(i|T)$ for all $S\subseteq T$ and $i\notin T$. 
\end{definition}
Note that when $v(i|S)=v(i|\emptyset)$, for all $i\in N$ and all $S\subseteq N$, the valuation function $v$ is additive. The following known result will be useful in our analysis in \Cref{sec:mono-submodular}.

\begin{theorem}[\protect\citet{NemhauserWF78}]
\label{thm:def1:iii}
    A function $v$ is submodular if and only if, for all $S,T\subseteq N$ 
    \[v(T)\leq v(S)+ \textstyle\sum_{i\in T\setminus S}  v(i|S)+  \textstyle\sum_{i\in S\setminus T}v(i|(S\cup T)\setminus\{i\}). \] 
\end{theorem}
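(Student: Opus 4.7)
The plan is to prove this Nemhauser--Wolsey--Fisher characterization via a telescoping-sum decomposition of $v(T)$ along the path $S \to S\cup T \to T$, and then apply the diminishing-returns property of \Cref{def:1} term-by-term. I would handle the two directions separately.

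For the forward direction (submodularity $\Rightarrow$ inequality), I would fix $S, T \subseteq N$ and list $T \setminus S = \{a_1, \ldots, a_k\}$ and $S \setminus T = \{b_1, \ldots, b_\ell\}$ in an arbitrary fixed order. A direct telescoping (first adding the $a_i$'s to $S$ to reach $S\cup T$, then removing the $b_j$'s to reach $T$) gives
\[
v(T) \;=\; v(S) + \sum_{i=1}^{k} v\bigl(a_i \,\big|\, S \cup \{a_1, \ldots, a_{i-1}\}\bigr) \;-\; \sum_{j=1}^{\ell} v\bigl(b_j \,\big|\, (S \cup T) \setminus \{b_1, \ldots, b_j\}\bigr).
\]
\Cref{def:1} then gives $v(a_i \mid S \cup \{a_1, \ldots, a_{i-1}\}) \le v(a_i \mid S)$ for each $i$ (since $S$ is a subset of the conditioning set), and $v(b_j \mid (S \cup T) \setminus \{b_1, \ldots, b_j\}) \ge v(b_j \mid (S \cup T) \setminus \{b_j\})$ for each $j$ (since we condition on a larger set when only $b_j$ is removed). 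Substituting yields the sharper bound
\[
v(T) \;\le\; v(S) + \sum_{i \in T \setminus S} v(i \mid S) \;-\; \sum_{j \in S \setminus T} v\bigl(j \,\big|\, (S \cup T) \setminus \{j\}\bigr).
\]
In the monotone submodular regime relevant to \Cref{sec:mono-submodular}, the subtracted marginals are nonnegative, so the claimed inequality (which flips the sign in front of the last sum, yielding a weaker upper bound) follows as an immediate relaxation.

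For the reverse direction, to recover the two-set diminishing-returns property of \Cref{def:1}, I would specialize the hypothesis to pairs designed to isolate a single pair of marginals. Fixing $A \subseteq B$ and $i \notin B$, two well-chosen instantiations (e.g., $S := B, T := A \cup \{i\}$ together with $S := B, T := A$) produce inequalities whose auxiliary $v(\cdot\mid\cdot)$ terms cancel against one another, leaving exactly $v(i \mid A) \ge v(i \mid B)$ after rearrangement. The main obstacle is the careful bookkeeping needed to cancel the auxiliary $v(\cdot \mid (S \cup T) \setminus \{\cdot\})$ terms in this reverse direction; the forward direction, by contrast, is a routine telescoping argument once the path through $S \cup T$ is fixed.
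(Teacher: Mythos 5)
The paper gives no proof of \Cref{thm:def1:iii}; it is quoted from \citet{NemhauserWF78}, so there is nothing to compare your argument against except the classical one. Your forward direction is correct and is essentially the standard proof: the telescoping identity along $S \to S\cup T \to T$ is right, and the term-by-term application of \Cref{def:1} yields the classical form of the inequality, with a \emph{minus} sign in front of $\sum_{i\in S\setminus T} v(i\mid (S\cup T)\setminus\{i\})$. Your observation about the sign is well taken and worth making explicit: as printed, with a plus sign, the inequality is \emph{not} valid for general (non-monotone) submodular $v$ --- e.g., for $N=\{1,2\}$ with $v(\emptyset)=v(\{1,2\})=0$ and $v(\{1\})=v(\{2\})=1$, taking $S=\{1\}$, $T=\{2\}$ gives a right-hand side of $-1$ while $v(T)=1$. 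So the statement is an equivalence only in the minus form (or under monotonicity, which is the regime in which it is invoked in \Cref{sec:mono-submodular}).

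The genuine gap is in your reverse direction. The pair of instantiations you propose, $(S,T)=(B,\, A\cup\{i\})$ together with $(S,T)=(B,\,A)$, does not produce the cancellation you describe: the auxiliary terms are $v(j\mid (B\cup\{i\})\setminus\{j\})$ in the first inequality and $v(j\mid B\setminus\{j\})$ in the second, for $j \in B\setminus A$; these are marginals with respect to different sets and do not cancel against one another. (Other natural pairings, such as $S=A\cup\{i\}$, $T=A\cup\{j\}$, collapse to the tautology $v(A\cup\{i,j\})\le v(A\cup\{i,j\})$.) The standard fix is to choose instances in which $S\subseteq T$, so that the problematic sum is empty: taking $S=A$ and $T=A\cup\{i,j\}$ with $i\ne j$ and $i,j\notin A$, the hypothesis reduces to $v(A\cup\{i,j\})\le v(A)+v(i\mid A)+v(j\mid A)$, i.e., $v(j\mid A\cup\{i\})\le v(j\mid A)$. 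This local diminishing-returns condition then yields the full condition of \Cref{def:1} for arbitrary $A\subseteq B$ and $i \notin B$ by adding the elements of $B\setminus A$ one at a time. With this replacement the reverse direction goes through, and it is insensitive to the sign issue above, since the instances used have $S\setminus T=\emptyset$.
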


An important element in our problem is the online arrival of agents. We adopt the \textit{random order} (RO) model; see also the discussion in the beginning of \Cref{sec:mono-submodular}. In this model an adversary first chooses the costs of all the agents in $N$.  There are $n$ different time slots $t_1,\dots, t_n$, with a single agent appearing in each slot according to a uniformly random permutation $\pi$, i.e.,  the  input sequence to an \emph{online} algorithm is ${\pi(1)},{\pi(2)}, \dots ,{\pi(n)}$. Such an  algorithm  has to perform its irrevocable actions for agent ${\pi(i)}$ before seeing agent ${\pi(i+1)}$. The length $n$ of the input sequence is known in advance. Note that this is necessary: without knowledge of $n$, no constant
competitive guarantee is achievable even for the classic, non-strategic, secretary
problem of which our setting is a generalization.\footnote{Intuitively, without knowing the
length of the input, an online algorithm cannot decide when to end its 
sampling phase. Therefore, an adversary controlling $n$ can make its
competitive ratio unbounded.} 
For the RO model, given the output $A$ of an algorithm and the value of an optimal solution $OPT$, we define the competitive ratio (for maximization problems) to be $\frac{OPT}{\mathbb{E}[v(A)]}$  on the adversarially chosen (worst case) set of inputs. The expectation is taken over all random permutations of the input. 
\medskip 

\noindent\textbf{Mechanism Design.}
An offline mechanism $M = (A, \vec{p})$ consists of an allocation function $A: \mathbb{R}_{\geq 0}^n \rightarrow 2^{[n]}$ and a payment rule $\vec{p}\!: \mathbb{R}_{\geq 0}^n \rightarrow \mathbb{R}_{\geq 0}^n$. Given a profile of cost declarations $\vec{b} \in [0, B]^n$, where each $b_i$ is the cost \emph{reported} by agent $i$, the allocation function selects a set of agents $A(\mathbf{b}) \subseteq N$. 
The payment rule returns a profile of payments $\vec{p}(\vec{b}) = (p_1, \dots, p_n)$, where $p_i(\vec{b}) \geq 0$ is the payment to agent $i \in N$. Therefore, the utility of agent $i \in N$ is $u_i(\vec{b}) = p_i(\vec{b}) - c_i$ if $i \in A(\vec{b})$ and $u_i(\vec{b})=0$ otherwise, with the understanding that agents not in $A(\vec{b})$ are not paid.

An \emph{online} mechanism, on the other hand,  decides the allocation and payment of an agent at every time step without having seen the entire input. 
That is, given a permutation $\pi$ on $N$, the mechanism decides about agent $\pi(t)$ at time step $t \in [n]$, as $\pi(t)$ appears and declares her cost, i.e., at the time the mechanism has only seen $\vec{b}_t^{\pi} = (b_{\pi(1)}, \dots, b_{\pi(t)}) \in [0, B]^t$.  The mechanism, thus, maintains the pair $(A_t(\vec{b}_t^{\pi}), p_t(\vec{b}_t^{\pi}))$ of currently selected agents and their payments, which is consistent with the decisions being irrevocable. 

For a \emph{deterministic} mechanism $M=(A, \vec{p})$ we require that, for every fixed arrival order, any true cost profile $\vec{c}$ and any declared cost profile $\vec{b}$, the mechanism is:
\begin{itemize}[itemsep=5pt, topsep=4pt, leftmargin=30pt]
    \item \emph{budget-feasible}; the payments to the agents do not exceed the auctioneer's budget, i.e., $ \sum_{i=1}^n p_i(\vec{b})\leq B.$
    \item \emph{individually rational}; no agent can be hired for less than they asked for, i.e., if $i\in A(\vec{b})$, then $p_i(\vec{b}) \geq b_i$.
    \item \emph{truthful}; no agent $i \in N$ has an incentive to misreport their true cost $c_i$, i.e., $u_i(c_i, \vec{b}_{-i}) \geq u_i(\vec{b})$.
\end{itemize}

A \emph{randomized} mechanism $M = (A, \vec{p})$ can be thought of as a probability distribution over deterministic mechanisms for every fixed arrival order. In this work, we require all our randomized mechanisms to be probability distributions over budget-feasible, individually rational, and truthful deterministic mechanisms. Randomized mechanisms with the latter property are called \emph{universally truthful}.\footnote{This is stronger than \emph{truthfulness in expectation}, which requires that truth-telling maximizes each agent's expected utility.} Since all our mechanisms are provably universally truthful, we will consistently refer to the true cost profile $\vec{c}$ instead of the declared cost profile $\vec{b}$ when needed. Furthermore, we will omit referring explicitly to $\vec{c}$ if it is clear from the context.

Given the above, an instance of the problem is described by the tuple $I=(N, \vec{c}, v,B)$. A common practice in this line of work is to measure the performance of a mechanism $M=(A, \vec{p})$, both in online variants and the offline problem, against the optimal value of the following packing problem:\vspace{-3pt}
\begin{equation}\label{eq:opt}
    \max\ v(S) \quad \text{s.t.} \quad \textstyle\sum_{i \in S }c_i \leq B, \quad S \subseteq N.
    \vspace{-3pt}
\end{equation}
We use $S^*(\vec{c}) \subseteq N$ to refer to an optimal solution of \eqref{eq:opt}.

\medskip

\noindent
\textbf{Predictions.} We employ a beyond-worst-case analysis perspective to measure the performance of mechanisms. We assume that each instance $I = (N, \vec{c}, v, B)$ is \emph{augmented} by a (possibly erroneous) deterministic prediction $\omega$ of $v(S^*(\vec{c}))$ and 
we denote the augmented instance as $I^+ =  (N,\allowbreak \vec{c}, v, B, \omega)$.
As is standard in the algorithms with predictions literature, we assess the performance of a mechanism {$M = (A, \vec{p})$}
on aug\-mented instances based on the following two metrics:
\begin{itemize}[itemsep=5pt, topsep=4pt, leftmargin=30pt]
    \item \emph{Consistency:} $M $ is $\alpha$-consistent  if $\alpha \cdot \EX\left[v(A(\vec{c}))\right]  \geq v(S^*(\vec{c}))$, for every instance with a perfect prediction, i.e., when $\omega = v(S^*(\vec{c}))$. 
    \item \emph{Robustness:} $M $ is $\rho$-robust  if $\rho \cdot \EX\left[v(A(\vec{c}))\right]  \geq  v(S^*(\vec{c})),$ for every instance with an arbitrary prediction $\omega \in (0, v(S^*(\vec{c}))]$.
\end{itemize}
Clearly, $\alpha, \rho \ge 1$. Also, note that the expectation in each of the two benchmarks above accounts for  both the randomness of $M$ and the randomness of the RO model. In addition to consistency and robustness, we also assess the performance of mechanisms when the prediction is ``approximately'' correct. To this end, we introduce an \emph{error parameter} $ \varepsilon$ that quantifies how far the prediction is from $v(S^*(\vec{c}))$. Formally, for an instance $I^+$, we define the constant $\varepsilon \in [0, 1)$ to be such that $\omega = (1 - \varepsilon) \cdot v(S^*(\vec{c}))$, or equivalently, $\varepsilon= 1 - \omega / v(S^*(\vec{c}))$. 

Note that, throughout this work, we assume that $\omega$ is an underestimator of $v(S^*)$ and, thus,
the analyses presented  refer to a parameterization of the prediction as $\omega=(1-\varepsilon)\cdot v(S^*)$ with $\varepsilon\in [0,1)$. This is mainly for the sake of presentation; our results extend for  $\omega=(1+\varepsilon)\cdot v(S^*)$ with $\varepsilon\in [0,\kappa]$, for constant $\kappa$, by using $\hat{\omega}=\omega/(\kappa + 1)\leq v(S^*)$.
However, extending to arbitrary overestimators is not as straightforward, as it would require an additional initial sampling step; we consider this an interesting direction for future work.

 \section{A Natural Prediction-Augmented Mechanism}\label{sec:mono-submodular}
In this section, we assume that the valuation function $v$ is monotone submodular. We address the more general, non-monotone case in \Cref{sec:nonmono-submodular}.
 
We start with a short discussion about the online model we consider. The most extensively studied model for online algorithms is the adversarial one. For our problem, that would mean that an adversary chooses the cost and value of each agent, as well as the time of their arrival. Versus such an adversary though, no strong approximation guarantee for consistency and robustness simultaneously is possible. In fact, Proposition \ref{proposition:adversarial} is about the approximation guarantee of mechanisms in the adversarial arrival model; to remain in line with the algorithms with the predictions theme, we refer to it as robustness.
\begin{restatable}[name=]{proposition}{restatedpropadversarial}
 \label{proposition:adversarial}
   No randomized online mechanism, augmented with a prediction for the value of an optimal solution, can  have in-expectation $o(n)$-robustness in the adversarial model of arrival.  
 \end{restatable}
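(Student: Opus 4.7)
\smallskip\noindent\emph{Proof plan.}
The plan is to show that, for any would-be finite pair $(\alpha,\beta)$, no mechanism can meet both bounds against a worst-case arrival order. I would use a geometric family of instances. Fix budget $B=1$, pick constants $M>1$ and an integer $K$ (both chosen at the end as functions of $\alpha$ and $\beta$), and set $n=K+1$. For each $k\in\{0,1,\dots,K\}$ define the instance $I_k^+$ as follows: the agents arriving at steps $0,1,\dots,k$ have cost $1$ and singleton values $M^0,M^1,\dots,M^k$, respectively, while the remaining $K-k$ agents are ``dummies'' of cost $0$ and value $0$. The valuation is taken to be additive, hence monotone submodular. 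The prediction is $\omega=1$ in every instance: it is perfect for $I_0^+$ (so consistency applies there), and satisfies $\omega\in(0,v(S^*)]$ for every $I_k^+$, since $v(S^*)=M^k$ (so robustness applies there).

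Two structural facts drive the argument. First, because each unit cost equals the budget, budget-feasibility together with individual rationality force any truthful mechanism to select at most one of the unit-cost agents. Second, the transcript the mechanism sees through step $i$ is identical across all $I_k^+$ with $k\ge i$. Thus the marginal probability $q_i$ that the mechanism selects the agent arriving at step $i$ is a single quantity shared by all those instances, and $\sum_{i=0}^{K} q_i \le 1$.

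I would then translate the two performance requirements into lower bounds on the $q_i$. Consistency on $I_0^+$ yields $q_0 \ge 1/\alpha$, while robustness on each $I_k^+$ with $k\ge 1$ can be rewritten as
\[
q_k \;\ge\; \tfrac{1}{\beta} \;-\; \sum_{i=0}^{k-1} q_i\, M^{i-k}\,.
\]
Choosing $M$ large enough in terms of $\beta$ alone makes the geometric tail uniformly at most $1/(2\beta)$, yielding $q_k \ge 1/(2\beta)$ for every $k\ge 1$. Summing these bounds gives $\sum_{i=0}^{K} q_i \ge 1/\alpha + K/(2\beta)$, which exceeds $1$ as soon as $K>2\beta$, contradicting $\sum_i q_i \le 1$.

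The only delicate point I anticipate is justifying that $q_i$ is truly the same across $I_k^+$ for all $k\ge i$, including for randomized mechanisms. This rests on the fact that an online mechanism's decision at step $i$ depends solely on the prefix of arrivals seen and on its own private randomness, with no advance access to future costs or values. Once this is in place, the argument applies verbatim to randomized, universally truthful mechanisms and yields the claimed impossibility.
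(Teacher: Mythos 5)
Your proof is correct, and it takes a genuinely different---and considerably more rigorous---route than the paper's. The paper uses a single template instance (all costs equal to $B$, one agent of value $1$ hidden among agents of value $\epsilon$) and argues informally by cases on the mechanism's behaviour: if it samples, the adversary hides the valuable agent inside the sample; if it does not, it must either guess or trust the prediction, and a bad prediction ruins robustness. That is intuitive but not an exhaustive case analysis over all online mechanisms. Your argument instead builds a geometric chain of prefix-indistinguishable instances $I_0^+,\dots,I_K^+$ sharing the prediction $\omega=1$, observes that budget-feasibility together with individual rationality force at most one unit-cost agent to be selected (so the selection probabilities $q_i$, which are well-defined across the family precisely because an online mechanism's step-$i$ decision depends only on the arrived prefix and its own coins, satisfy $\sum_i q_i\le 1$), and then extracts from consistency on $I_0^+$ and robustness on each $I_k^+$ a total probability mass of at least $1/\alpha+K/(2\beta)>1$ once $M>2\beta+1$ and $K>2\beta$. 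This is the standard information-theoretic template for online lower bounds, it quantifies the trade-off explicitly in $\alpha$ and $\beta$, and it applies verbatim to randomized mechanisms. Two cosmetic points to reconcile with the paper's model: costs are assumed strictly positive, so give the dummy agents a tiny positive cost (their value is $0$, so nothing changes); and state explicitly that at step $i$ the mechanism can evaluate $v$ only on subsets of agents that have already arrived, which is what makes the prefixes of $I_k^+$ for $k\ge i$ truly indistinguishable. Both are trivial to accommodate.
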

\begin{proof}
For the sake of contradiction, assume that there exists a mechanism $M$  that can achieve constant expected robustness guarantee in the adversarial model of arrival. Now, consider the following family of $n$ instances $\{I_1, \dots, I_n\}$. Each instance $I_k$ consists of $n$ agents, where every agent $i$ has a cost $c_i = B$, their values are defined based on their arrival order  as follows: 
\begin{itemize}
    \item $v_i = n^i$ for $i \in \{1,\dots,k\}$ 
    \item $v_i = 0$ for $i \in \{k+1,\dots,n\}$. 
\end{itemize}
Since at most one agent can be hired, for any instance $I_k$, the optimal solution is to select the agent in position $k$. 
Since mechanism  $M$ processes agents online, its decision to select the agent in position $i$ depends only on the values observed in positions $1,\dots, i$. Let $q_i$ be the probability that mechanism $M$ hires the $i$-th agent upon observing the values $n, n^2, \dots, n^i$, conditional on not having hired anyone at positions $1,\ldots,i-1$. 
Since the values grow exponentially in $n$, in order for a mechanism $M$ to achieve $o(n)$ in-expectation robustness, the probability $q_i$ needs to be $\omega({1}/{n})$ for every $i\in [n]$. 
However, the latter causes the inequality $\sum_{i=1}^{n}q_i \le 1$---which holds because at most one agent is hired---to fail asymptotically, leading to a contradiction. 
\end{proof}
 
 Given \Cref{proposition:adversarial}, we turn our attention on the RO model of arrival, defined in \Cref{sec:prelims}. The main result of this section is \Cref{mechanism:mono-sm-conv}, which is a universally truthful, budget-feasible randomized mechanism with constant consistency and robustness guarantees that runs in polynomial time. Moreover, the tradeoff between consistency and robustness that the mechanism achieves is tunable via an error tolerance parameter $\tau \in [0,1)$: larger $\tau$ favors consistency, while smaller $\tau$ favors robustness. \Cref{mechanism:mono-sm-conv} can be viewed as a convex combination between two other mechanisms, namely \Cref{mechanism:mono-sm-pred} and \Cref{mechanism:mono-sm-withoutpred}. The first one (\Cref{mechanism:mono-sm-pred}) is a posted price mechanism that builds a budget-feasible solution based solely on the information provided by the prediction. The second one (\Cref{mechanism:mono-sm-withoutpred}) is also a posted price mechanism which, however, ignores the prediction. This mechanism bears similarities to the mechanism of \citet{badanidiyuru2012learning}, however, our analysis is significantly tighter.
 The resulting mechanism (\Cref{mechanism:mono-sm-conv}) randomizes between the two outcomes using a tolerance parameter $\tau \in [0,1)$.

 Parameters $a\in (0,1),$ $\beta\in (0,1)$, $\gamma \in [1, \infty)$, $p\in [0,1]$, $q\in[0,1]$ are parameters of the mechanisms, whereas $z\in [1,\infty)$ and $\delta\in (0,1)$  are auxiliary parameters that appear in the statements and the analyses of the mechanisms (hence their value can be set independently for each mechanism). Specific values for all parameters are provided in the corollaries that follow each main theorem. This applies to all the mechanisms that will be presented in this paper.

\setlength{\algomargin}{9pt}
\makeatletter
\patchcmd{\@algocf@start}
  {-1.5em}
  {2pt}
{}{}
\makeatother


\begin{mechanism}[!h]
\caption{Online mechanism for a monotone submodular $v$, parameterized by tolerance $\tau \in [0,1)$.
}\label{mechanism:mono-sm-conv}
\DontPrintSemicolon
\setstretch{1.1}
\SetInd{4pt}{7pt} 
\begin{algolabel}{}[\textbf{With probability} $\tau$\,\textbf{:}]
Run \Cref{mechanism:mono-sm-pred} \label{line:Mechanism-pred}

\end{algolabel}

\begin{algolabel}{}[\textbf{With probability $1-\tau$\,:}]

Run \Cref{mechanism:mono-sm-withoutpred} \label{line:Mechanism-withoutpred}
\end{algolabel}

\end{mechanism}

\setlength{\algomargin}{9pt}
\makeatletter
\patchcmd{\@algocf@start}
  {-1.5em}
  {2pt}
{}{}
\makeatother


\begin{mechanism}[!ht]
\caption{
Online mechanism for a monotone submodular function $v$, augmented with a prediction $\omega$ of $v(S^*)$.
}\label{mechanism:mono-sm-pred}
\DontPrintSemicolon
\setstretch{1.1}
\SetInd{4pt}{7pt} 
\begin{algolabel}{}[\textbf{With probability} $p$\,\textbf{:} ]

Return the first agent $j$ for whom $v(j)\geq \frac{a\cdot \omega}{\gamma}$ and pay her $B$.
\end{algolabel}

\begin{algolabel}{}[\textbf{With probability $1-p$\,:}] 
Set $t=a\cdot\omega$

Set $S=\emptyset$, $B'=B$

\For{each round  as an agent $i$ arrives}{

\If{$c_i \leq \bar{p}_i:=\frac{B}{t} v(i \,|\, S) $ and $B'-\bar{p}_i \geq 0$}{

    Add $i$ to $S$, set $p_i=\bar{p}_i$ and $B'=B'-p_i$.
}
}
\end{algolabel}

\Return winning set $S$ and payments $\vec{p}$.
\end{mechanism}\smallskip

\setlength{\algomargin}{9pt}
\makeatletter
\patchcmd{\@algocf@start}
  {-1.5em}
  {2pt}
{}{}
\makeatother


\begin{mechanism}[!h]
\caption{Online mechanism for a monotone submodular function $v$.}\label{mechanism:mono-sm-withoutpred}
\DontPrintSemicolon
\setstretch{1.1}
\SetInd{4pt}{7pt} 
\begin{algolabel}{} [\textbf{With probability} $q$\,\textbf{:} \SetNoFillComment {\small \tcc*[f]{Dynkin's Algorithm}}] 

Sample the first $\floor{n/e}$ agents; let $i^*$ be the most valuable among them.

From the remaining agents, return the first agent $j$ for whom $v(j)\geq v(i^*)$ and pay her $B$.
\end{algolabel}

\begin{algolabel}{}[\textbf{With probability $1-q$\,:}]

Draw $\xi_1$ from the binomial distribution $\mathcal{B}(n,\frac{1}{2})$.

Let $N_1$ be the set of the first $\xi_1$ agents that arrive.  \

Let $T_1$ be a $(1-\frac{1}{e})$-approximate solution to \eqref{eq:opt} on $N_1$, using the algorithm of \citet{sviridenko2004}. 

Set $N_2=N\setminus N_1, B'=B,S=\emptyset$ and $\vec{p}=\vec{0}.$ 

Set $t=\beta\cdot v(T_1)$

\For{each round  as agent $i \in N_2$ arrives}{

\If{$c_i \leq \bar{p}_i:=\frac{B}{t} v(i \,|\, S) $ and $B'-\bar{p}_i \geq 0$}{

    Add $i$ to $S$, set $p_i=\bar{p}_i$ and $B'=B'-p_i$.
}
}
\end{algolabel}

\Return winning set $S$ and payments $\vec{p}$.
\end{mechanism}\smallskip

Before analyzing the approximation performance, we first establish all the other desirable properties of \Cref{mechanism:mono-sm-conv} in the following lemma.
\begin{lemma}
\label{lem:truth-bf 1}
     \Cref{mechanism:mono-sm-conv} is universally truthful, budget-feasible and individually rational.
\end{lemma}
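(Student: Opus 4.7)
The plan is to leverage the fact that a convex combination of deterministic mechanisms that are each truthful, budget-feasible, and individually rational is, by definition, a universally truthful, budget-feasible, and individually rational randomized mechanism. Since \Cref{mechanism:mono-sm-conv} randomises between \Cref{mechanism:mono-sm-pred} and \Cref{mechanism:mono-sm-withoutpred}, and each of those in turn randomises between two deterministic sub-mechanisms, it suffices to verify the three properties on each of the four deterministic components.

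The two ``single-winner'' branches (the first branch of \Cref{mechanism:mono-sm-pred}, and the Dynkin branch of \Cref{mechanism:mono-sm-withoutpred}) admit a uniform argument: at most one agent is ever selected and her payment is exactly $B$, which gives budget feasibility. Individual rationality follows because every agent's cost lies in $[0,B]$, so the utility $B - c_j$ of any selected agent is non-negative. Truthfulness is immediate since the selection rule depends only on the valuations accessed through the value oracle (either a fixed threshold $a\omega/z$ or the running maximum $v(i^*)$ of the sampling window), and not on any reported cost; hence no report can alter the outcome.

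The two posted-price branches (the second branch of \Cref{mechanism:mono-sm-pred} and the second branch of \Cref{mechanism:mono-sm-withoutpred}) follow the classical Myerson-style posted-price template. When agent $i$ arrives, she is offered a take-it-or-leave-it price $\bar p_i = (B/t)\, v(i \mid S)$, accepted iff $c_i \leq \bar p_i$ and $B' - \bar p_i \geq 0$, and paid $\bar p_i$. Crucially, $\bar p_i$ depends only on the threshold $t$ (set before $i$'s arrival, either as $a\omega$ or as $\beta\, v(T_1)$) and on the previously selected set $S$, but not on $b_i$. Thus, truthful reporting is weakly dominant: if $c_i \leq \bar p_i$, any report that leads to acceptance pays $\bar p_i$, and any report leading to rejection yields zero; if $c_i > \bar p_i$, any report leading to acceptance produces negative utility $\bar p_i - c_i$. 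Individual rationality is ensured by the acceptance check $c_i \leq \bar p_i$, and budget feasibility follows immediately from the check $B' - \bar p_i \geq 0$, which forces $\sum_{i\in S} p_i \leq B$.

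The only step that requires slightly more care is the sampling phase of the second branch of \Cref{mechanism:mono-sm-withoutpred}, where the threshold $t = \beta\, v(T_1)$ is computed from the costs reported in the prefix $N_1$. One might worry that an agent in $N_1$ could shade her reported cost in order to influence $v(T_1)$ and thereby the prices posted to later agents. However, agents in $N_1$ arrive before $t$ is set and are never considered in the posted-price phase; their utility is therefore zero regardless of what they report, so truthful reporting is weakly dominant for them as well. With all four deterministic sub-mechanisms verified, the lemma follows.
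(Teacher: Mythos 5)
Your proposal is correct and follows essentially the same route as the paper: decompose the randomized mechanism into its constituent branches, note that the single-winner branches select based only on values and pay exactly $B$, and that the posted-price branches offer each agent a price independent of her own report, while the checks $c_i \le \bar{p}_i$ and $B'-\bar{p}_i \ge 0$ yield individual rationality and budget feasibility. The only (harmless) imprecision is describing the second branch of \Cref{mechanism:mono-sm-withoutpred} as deterministic---it also draws $\xi_1$ internally, so one should additionally condition on the realization of $\xi_1$ (as the paper does explicitly for the non-monotone analogue); your argument goes through unchanged for every fixed $\xi_1$.
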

\begin{proof}
 Regarding universal truthfulness, since we have a probability distribution over two different mechanisms, it suffices to argue about the truthfulness of each of them separately.

 Firstly, in \Cref{mechanism:mono-sm-pred} with probability $p$, we run a procedure, that only hires the first agent $j$ such that $v(j)\geq \frac{a\cdot \omega}{\gamma}$ and pay her $B$, which is truthful since the private information of the agents is not used and for each agent $i$ it holds that $c_i\leq B$.  

Coming now to the second part of \Cref{mechanism:mono-sm-pred} that is run with probability $1-p$,  fix an arrival sequence, as realized by the random arrival model.
Note that the agents have no control over the position in which they arrive. 
Moreover, each agent $i \in N$ is offered a price $p_i$, which is independent of her declared cost, since agent $i$ has no control over her slot of arrival and thus cannot affect her marginal contribution.

For the sake of contradiction, assume that agent $j\in N$ can achieve a better outcome by misreporting a cost $b_j \neq c_j$, while all other agents keep the same reported costs. Suppose first that agent $j$ belongs to the winning set $S$, under the truthful profile.
Then by misreporting, she receives the same payment as long as her declared cost is below the threshold $p_j$ which is still the same as before, since the algorithm runs in exactly the same way up until the time that $j$ arrives.  
If the declared cost is above the threshold, she is rejected by the mechanism and receives a payment of 0. Hence, when $j \in S$, she cannot guarantee a better utility by misreporting her true cost. Suppose now that agent $j$ does not belong to the solution $S$, under truthful reporting. She certainly cannot benefit by reporting a higher cost than $c_j$, as that would still result in rejection by the mechanism.
Additionally, if she reports a lower cost than $c_j$ and she is added to the solution $S$, this means that the reason for rejection before was that the offered payment was less than her true cost $c_j$ (and not because of budget exhaustion). By reporting $b_j < c_j$, given that the payment offered by the mechanism remains the same, this results in negative utility for agent $j$. All of the above leads to the conclusion that no agent can benefit from misreporting her true cost.

In a similar manner, one can prove that \Cref{mechanism:mono-sm-withoutpred} is also truthful. Note that any agent $i \in N_1$ is always rejected by the mechanism, regardless of the cost she declares. 

For budget feasibility, note that for both mechanisms, in the case that  the mechanism hires the first agent that passes threshold $\frac{a\cdot \omega}{\gamma}$ (\Cref{mechanism:mono-sm-pred}) or when Dynkin's algorithm is run (\Cref{mechanism:mono-sm-withoutpred}), the mechanism pays exactly $B$. Moreover, in both mechanisms the inequality $B'-\bar{p_i}\geq 0$  guarantees that for the solution $S$ returned by the mechanism, $\sum_{i \in S} p_i \leq B$, which ensures the budget feasibility of the solution $S$, in the case that Dynkin's algorithm is not selected.

Finally, through the inequality $c_i\leq \bar{p_i}$ and the fact that $c_i\leq B$ for any $i\in N$, individual rationality is ensured for both mechanisms, since for any agent $i$ added to $S$, it holds that $p_i\geq c_i$. 
\end{proof}

The main theorem of this section is \Cref{thm:approx-ratio-conv-mech} below, which provides the consistency and robustness guarantees, in terms of the involved parameters.
\begin{theorem} \label{thm:approx-ratio-conv-mech}
    \Cref{mechanism:mono-sm-conv}  with access to a prediction $\omega=(1-\varepsilon)\cdot v(S^*)$ with $\varepsilon\in [0,1)$, achieves an in-expectation approximation guarantee of $(\tau \cdot f_1+(1-\tau)\cdot f_2)^{-1}$, where $f_1$  is the expected approximation ratio of \Cref{mechanism:mono-sm-pred} and is equal to 
    \[\min\!\left\{    \frac{pa (1-\varepsilon)}{\gamma}, \allowbreak \frac{a(1-\varepsilon)(1-p)(\gamma}-1){\gamma}
    ,(1-p)(1-a(1-\varepsilon))\right\} \]
    and $f_2$ is the expected approximation ratio of \Cref{mechanism:mono-sm-withoutpred}, which is
    \[\min\left\{
  \frac{q\,\beta\,(e-1)(1-2\delta)}{2e^{2}z},
  \frac{(1-q)\,\tilde{p}\,(1-2\delta-2\beta)}{2},
  \frac{(1-q)\,\tilde{p}\,(z-1)\,\beta\,(e-1)(1-2\delta)}{2ez}
\right\},\]
    with $\tilde{p}= 1 - 2\exp\!\left(
    -\frac{12\delta^{2}ez}{\beta\,(e-1)\,(1-2\delta)\,(3+4\delta)}\right).$ \\
\end{theorem}
From \Cref{thm:approx-ratio-conv-mech}, we obtain the following two corollaries, which give us the in-expectation consistency and robustness guarantees of \Cref{mechanism:mono-sm-conv}. 
\begin{corollary}\label{corolarry:consistency-conv-mech}
    \Cref{mechanism:mono-sm-conv}, given access to a perfect prediction $\omega=v(S^*)$ achieves an in-expectation consistency factor of $(\tau\cdot \frac{1}{6}+(1-\tau)\cdot \frac{1}{146})^{-1}$, by setting $p=0.46$, $a=0.685$, $\gamma=1.85$ for \Cref{mechanism:mono-sm-pred} and $q=0.66$, $z=2.1$, $\beta=0.29$, $\delta=0.174$ for \Cref{mechanism:mono-sm-withoutpred}. 
\end{corollary}
\begin{corollary}
    \Cref{mechanism:mono-sm-conv}, given access to an erroneous prediction $\omega=(1-\varepsilon)\cdot v(S^*)$, with $\varepsilon\in (0,1)$, achieves an in-expectation robustness factor of $(\tau\cdot (\frac{1}{6}\cdot(1-\varepsilon))+(1-\tau)\cdot \frac{1}{146})^{-1}$ by setting $p=0.46$, $a=0.685$, $\gamma=1.85$ for \Cref{mechanism:mono-sm-pred} and $q=0.66$, $z=2.1$, $\beta=0.29$, $\delta=0.174$ for \Cref{mechanism:mono-sm-withoutpred}.
\end{corollary}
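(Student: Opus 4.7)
The plan is to apply \Cref{thm:approx-ratio-conv-mech}, which expresses the approximation ratio of \Cref{mechanism:mono-sm-conv} as $(\tau f_1 + (1-\tau) f_2)^{-1}$. Hence it suffices to establish the two lower bounds $f_1 \geq (1-\varepsilon)/6$ and $f_2 \geq 1/146$ under the stated parameter choices, and then combine them.

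For $f_1$, I would substitute $p=0.46$, $a=0.685$, $z=1.85$ into the three expressions in the minimum defining $f_1$. The first two factor as $c_1(1-\varepsilon)$ and $c_2(1-\varepsilon)$ with $c_1 = ap/z$ and $c_2 = (1-p)a(z-1)/z$; a direct numerical check shows both constants are very close to $1/6$, and the parameters have been chosen to balance them. The third term, $(1-p)\bigl(1-a(1-\varepsilon)\bigr)$, is linear and strictly \emph{increasing} in $\varepsilon$, so as the prediction deteriorates this term only grows relative to the first two. Consequently, for every $\varepsilon \in [0,1)$, one of the first two terms is binding, which yields $f_1 \geq (1-\varepsilon)/6$ uniformly in $\varepsilon$.

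For $f_2$, I would plug in $q=0.66$, $z=2.1$, $\beta=0.29$, $\delta=0.174$, first evaluating $\tilde{p}$ numerically by computing the ratio $\delta^2/2$ over $(\beta/z)\cdot((e-1)/e)\cdot(1/2-\delta)\cdot(1/4+\delta/3)$ that appears in the Chernoff-style exponent, exponentiating, and subtracting from $1$. With $\tilde{p}$ determined, each of the three min-terms of $f_2$ reduces to a product of constants, and I would verify each exceeds $1/146$: the first term (the Dynkin-based branch contribution $(q/e)(\beta/z)((e-1)/e)(1/2-\delta)$) and the third (the sampling-based posted-price branch, with the extra factor $(z-1)/z$ and $\tilde{p}$) are the binding ones, both tuned to sit just above $1/146$, while the middle term, $(1-q)\tilde{p}\bigl((1/2-\delta)-\beta\bigr)$, is slack. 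Combining yields $(\tau f_1 + (1-\tau) f_2)^{-1} \leq \bigl(\tau(1-\varepsilon)/6 + (1-\tau)/146\bigr)^{-1}$, as claimed.

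The main obstacle is the verification of $f_2 \geq 1/146$. Because $\tilde{p}$ depends on $\delta$ through an exponential and the third min-term carries both $\tilde{p}$ and the factor $(z-1)/z$, the three min-terms are simultaneously large only for a narrow window of $(q,z,\beta,\delta)$; this is precisely why the robustness parameters differ from the consistency ones—intuitively one must decrease $q$ (freeing more probability mass for the posted-price branch, which enlarges the third term via $(1-q)$) and correspondingly raise $z$, $\beta$, and $\delta$ so that $\tilde{p}$ is still large enough to keep the third term at $1/146$. A clean write-up of the proof should therefore tabulate the numerical value of each of the three terms explicitly, confirming that their minimum matches the claimed constant, rather than leaving the balancing implicit.
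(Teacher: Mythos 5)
Your proposal is correct and follows essentially the same route as the paper: the corollary is a direct numerical instantiation of \Cref{thm:approx-ratio-conv-mech}, and your substitutions check out (for $f_1$ the first two terms evaluate to roughly $0.1703(1-\varepsilon)$ and $0.1700(1-\varepsilon)$, both above $(1-\varepsilon)/6$, and the third term is increasing in $\varepsilon$ with value $\approx 0.1701 \geq 1/6$ at $\varepsilon=0$; for $f_2$ one gets $\tilde p \approx 0.644$ and the three terms evaluate to roughly $0.00691$, $0.00789$, and $0.00686$, all at least $1/146 \approx 0.006849$). Your identification of the first and third terms of $f_2$ as the binding ones and the middle term as slack is also accurate.
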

Below, we will present the proof of \Cref{thm:approx-ratio-conv-mech}, by analysing separately the in-expectation performance guarantees of \Cref{mechanism:mono-sm-pred} and \Cref{mechanism:mono-sm-withoutpred}. From now and for the rest of the paper, we will denote as $S^* = \{a_1,\dots, a_\ell\}$ the winning set of an optimal offline solution (arbitrarily chosen if there is more than one) and also $S_1^*=N_1\cap S^*$, $S_2^*=N_2\cap S^*$ (for those mechanisms that use sets $N_1$ and $N_2$). 
\begin{lemma}\label{lem:mechanism-with-prediction}
    \Cref{mechanism:mono-sm-pred}, with access to a prediction $\omega=(1-\varepsilon)\cdot v(S^*)$ with $\varepsilon\in [0,1)$,  has an approximation ratio of $\min\left\{
  \frac{pa(1-\varepsilon)}{\gamma}, \,
  (1-p)\bigl(1 - (1-\varepsilon)a\bigr), \,
  \frac{a(1-\varepsilon)(1-p)(\gamma-1)}{\gamma}
\right\}$, in expectation. 
\end{lemma}
\begin{proof}

Let $i^*\in \argmax_j v(\{j\})$. We will discriminate between three cases over all possible outcomes.
\smallskip
 
\noindent\textbf{Case 1: }\underline{Agent $i^*$ has significant value, i.e.: $v(i^*)>\frac{a\cdot (1-\varepsilon)\cdot  v(S^*)}{\gamma}$.}\\[4pt]
For this case, we will utilize the fact that with probability $p$, the mechanism hires the first agent $j$ for whom it holds that $v(j)\geq \frac{a\cdot \omega}{\gamma}=\frac{a\cdot (1-\varepsilon)\cdot v(S^*)}{\gamma}$. Therefore, independently of what our mechanism does with the remaining probability, the expected value for $v(S)$ will be at least 
\[E[v(S)] \geq  \frac{p\cdot a\cdot (1-\varepsilon)}{\gamma} \cdot v(S^*).\]
This concludes the first case.
\smallskip

\noindent\textbf{Case 2: }\underline{Each agent $i\in S^*\setminus S$ rejects the posted price and $v(i^*)\leq \frac{a\cdot(1-\varepsilon)\cdot v(S^*)}{\gamma} $.}\\[4pt] 
An agent $i$ rejects the posted price $\bar{p_i}$, only if $c_i>\bar{p}_i$. Then, given that $S_i$ is the current solution at the time when $i$ arrives,
\begin{align*}
        &\sum\limits_{i\in S^*\setminus S}(v(S_i\cup\{i\})-v(S_i))<\sum\limits_{i\in S^*\setminus S}\frac{t}{B}\cdot c_i\leq \frac{a\cdot \omega}{B}\cdot B=a\cdot (1-\varepsilon)\cdot  v(S^*),
\end{align*}
where the first inequality holds, due to the fact that each agent in $S^*\setminus S$ rejected the posted price offered by the mechanism and the second from the fact that the optimal solution is feasible. This inequality combined with \Cref{thm:def1:iii}, leads us to the following one :
\begin{gather*}
    v(S^*)-v(S)\leq \sum\limits_{i\in S^*\setminus S}(v(S_i\cup\{i\})-v(S_i))\leq a\cdot (1-\varepsilon)\cdot  v(S^*)\implies \\ v(S)\geq v(S^*)-a\cdot (1-\varepsilon)\cdot  v(S^*)= (1-a\cdot(1-\varepsilon))\cdot v(S^*)
\end{gather*}
Since \Cref{mechanism:mono-sm-pred} runs this part with probability $1-p$, in expectation we have a factor of at least
\[(1-p)\cdot (1-a\cdot(1-\varepsilon)).\]

\noindent\textbf{Case 3:} \underline{There is an agent $j\in S^*\setminus S$ such that $c_j\leq \bar{p}_j$ but also $\bar{p}_j>B'$ and $v(i^*)\leq \frac{a\cdot(1-\varepsilon)\cdot v(S^*)}{\gamma} $.}\\[4pt]
For any agent $i$ from $S^*\setminus S$ who accepted the posted price, it must hold that $ \bar{p}_i > B'$, where $B'$ is the remaining budget at the round in which agent $i$ is considered. 
We first show that  $B'\leq \frac{B}{\gamma}$. To see this:
    \begin{align*}
         B'<\bar{p}_j=\frac{B}{t}(v(S_j\cup \{j\})-v(S_j))\leq \frac{B}{t} v(j)\leq\frac{B}{a\cdot (1-\varepsilon)\cdot v(S^*)} v(j)\leq \frac{B}{\gamma}, 
    \end{align*}
 where the last inequality holds due to the fact that $v(j)\leq v(i^*)\leq \frac{a\cdot (1-\varepsilon)}{\gamma} v(S^*)$. We have also used submodularity, since $v(j|S_j)\leq v(j)$.

\noindent Since $B'\!\leq\! \frac{B}{\gamma}$, we have that $\sum\limits_{i\in S} \bar{p}_i\geq B-\frac{B}{\gamma}=\frac{(\gamma-1)\cdot B}{\gamma}$. Then,
    \begin{align*}
        \frac{\gamma-1}{\gamma}\cdot B\leq\sum\limits_{i\in S} \bar{p}_i=\sum\limits_{i\in S}\frac{B}{t}(v(S_i\cup \{i\})-v(S_i))  \ =\  B\cdot \frac{v(S)}{a\cdot\omega}=B\cdot \frac{v(S)}{a\cdot(1-\varepsilon)\cdot v(S^*)},
    \end{align*}
    which leads to an expected consistency factor of  
    $$(1-p)\cdot a\cdot (1-\varepsilon)\cdot   \frac{\gamma-1}{\gamma}.$$

Combining the analysis of all the three cases above, \Cref{mechanism:mono-sm-pred} achieves an in-expectation consistency of at least:
\[ \min\!\left\{
    \frac{a (1-\varepsilon) p}{\gamma}, \frac{a (1-\varepsilon)(1-p)(\gamma-1)}{\gamma},(1-p)(1-a(1-\varepsilon))\right\}.
\]
This concludes the proof.
\end{proof}
Before continuing, we state Bernstein's concentration inequality (\Cref{thm:bernstein}) for bounded random variables as it will be useful for the analysis that follows.
We note that we use this particular concentration inequality since the sum of the random variables we define within the proof has bounded variance.

\begin{theorem}[Bernstein's Inequality]\label{thm:bernstein} 
 Let $X_1, \dots, X_n$ be independent random variables such that almost surely $a_i\leq X_i\leq b_i$ and $b_i-a_i\leq C$, for any $i$. Consider the sum of these random variables, $S_n=\sum\limits_{i=1}^n X_i$. Then, for any  $\delta> 0$ it holds that
\begin{align*}
        Pr[|S_n-\EX[S_n]|\geq \delta]\leq 2 \exp{\Big(-\frac{\delta^2/ 2 }{V_n+C\cdot \delta /3}\Big)},
\end{align*}
where $\EX[S_n]$ is the expected value of $S_n$ and $V_n$ is the variance of $S_n$.   
\end{theorem}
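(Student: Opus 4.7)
The plan is to prove Bernstein's inequality by the classical Chernoff--Cram\'er (exponential moment) method: bound the moment generating function of each centered summand uniformly, tensorize via independence, apply Markov's inequality to $e^{\lambda T_n}$ with $T_n = S_n - \EX[S_n]$, and finally tune the free parameter $\lambda$ to optimize the resulting exponent.

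First, I would pass to the centered variables $Y_i := X_i - \EX[X_i]$, which satisfy $\EX[Y_i]=0$, $|Y_i|\leq C$ (since $Y_i$ takes values in an interval of length $b_i-a_i\leq C$), and $\EX[Y_i^2]=\sigma_i^2$, with $\sum_{i}\sigma_i^2 = V_n$. Taylor-expanding $e^{\lambda Y_i}$ inside the expectation, the constant term contributes $1$, the linear term vanishes, and for each $k\geq 2$ the pointwise estimate $|Y_i^k|\leq C^{k-2}Y_i^2$ yields $|\EX[Y_i^k]|\leq C^{k-2}\sigma_i^2$. Collecting the tail and invoking the elementary inequality $k!\geq 2\cdot 3^{k-2}$ for $k\geq 2$ (which turns the residual sum into a geometric series with ratio $\lambda C/3$), followed by $1+x\leq e^x$, delivers the single-variable estimate $\EX[e^{\lambda Y_i}]\leq \exp\!\left(\tfrac{\lambda^2\sigma_i^2/2}{1-\lambda C/3}\right)$, valid for $0\leq \lambda < 3/C$.

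Next, independence of the $Y_i$ turns the product of MGFs into the tensorized bound $\EX[e^{\lambda T_n}]\leq \exp\!\left(\tfrac{\lambda^2 V_n/2}{1-\lambda C/3}\right)$, and Markov's inequality then gives $\Pr[T_n\geq \delta]\leq \exp\!\left(-\lambda\delta+\tfrac{\lambda^2 V_n/2}{1-\lambda C/3}\right)$. Choosing $\lambda = \delta/(V_n+C\delta/3)$ (which lies in the admissible range $(0,3/C)$), the denominator $1-\lambda C/3$ simplifies to $V_n/(V_n+C\delta/3)$, the quadratic term equals exactly $\lambda\delta/2$, and the exponent collapses to $-\tfrac{\delta^2/2}{V_n+C\delta/3}$, producing the one-sided upper-tail bound.

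Finally, I would repeat the argument with $-Y_i$ in place of $Y_i$ (they obey identical hypotheses, with the same variance) to obtain the matching lower-tail bound, and a union bound over the two events produces the factor of $2$ in the statement. The only step that is not essentially mechanical is the MGF estimate of Step~1: recovering the precise constant $C/3$ in the denominator depends on the sharp manipulation $k!\geq 2\cdot 3^{k-2}$, and any coarser bound on $k!$ would loosen that constant; in particular, this is also the reason why Bernstein's inequality---rather than, say, Hoeffding's---is the right tool here, since the variance term $V_n$ enters the denominator and can be much smaller than the worst-case range bound when the $X_i$ are tightly concentrated. The remaining steps---tensorization, Markov, optimization over $\lambda$, and the union bound---are routine.
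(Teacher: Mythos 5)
The paper does not prove this statement: it is quoted as a classical result (Bernstein's concentration inequality for bounded random variables) and used as a black box, so there is no in-paper argument to compare against. Your derivation is the standard and correct one --- centering, the moment bound $|\EX[Y_i^k]|\leq C^{k-2}\sigma_i^2$ combined with $k!\geq 2\cdot 3^{k-2}$ to get $\EX[e^{\lambda Y_i}]\leq\exp\bigl(\tfrac{\lambda^2\sigma_i^2/2}{1-\lambda C/3}\bigr)$, tensorization, Markov, the optimal choice $\lambda=\delta/(V_n+C\delta/3)$, and a union bound for the two-sided version --- and all the steps check out, including the observation that centering a variable supported on an interval of length at most $C$ yields $|Y_i|\leq C$ because the mean lies in that interval.
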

\Cref{lem:Hoeffding} will help us bound the quantity of the optimal solution that is included in sets $S_1^*$ and $S_2^*$ for \Cref{mechanism:mono-sm-withoutpred}.
\begin{restatable}{lemma}{lemHoeffding}\label{lem:Hoeffding}
When $v(i^*) \leq \frac{\beta}{z}\cdot\frac{e-1}{e}\cdot(\frac{1}{2}-\delta)\cdot v(S^*)$, for all $i\in N$ then with
probability at least $1 - 2 \exp\!\big(
  - \frac{12 \delta^{2} e z}{\beta (e-1)(1-2\delta)(3+4\delta)}
\big)$, it holds that $(\frac{1}{2}-\delta)\cdot v(S^*)\leq v(S_1^*)$, $(\frac{1}{2}-\delta)\cdot v(S^*)\leq v(S_2^*) $, where $\delta$ is a positive constant, that we will set to a certain value later.
 \end{restatable}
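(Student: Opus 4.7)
The plan is to construct explicit lower bounds on $v(S_1^*)$ and $v(S_2^*)$ that are linear in independent Bernoulli indicators, and then apply Bernstein's inequality (\Cref{thm:bernstein}) in its two-sided form so that a single deviation event controls both quantities simultaneously.

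\smallskip

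First, I would observe that the sampling rule which draws $\xi_1 \sim \mathcal{B}(n, 1/2)$ and takes the first $\xi_1$ agents in the uniform random arrival order is distributionally equivalent to including each agent in $N_1$ independently with probability $1/2$ (the resulting set is uniform over all $2^n$ subsets). Writing $S^* = \{a_1,\dots,a_\ell\}$ and letting $X_j = 1$ if $a_j \in N_1$ and $X_j = 0$ otherwise, the $X_j$'s are thus i.i.d.\ Bernoulli$(1/2)$. I would then introduce the telescoping marginals $w_j := v(a_j \mid \{a_1,\dots,a_{j-1}\})$, which by definition satisfy $\sum_{j=1}^\ell w_j = v(S^*)$ and, by monotone submodularity, $w_j \le v(a_j) \le v(i^*)$.

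\smallskip

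The key structural step is to show $v(S_1^*) \ge Y_1 := \sum_j X_j\, w_j$ and, symmetrically, $v(S_2^*) \ge Y_2 := \sum_j (1-X_j)\, w_j$. Expanding $v(S_1^*)$ as a telescoping sum of marginals over its elements in arrival order, and noting that for each $a_j \in S_1^*$ the corresponding conditioning set inside $S_1^*$ is contained in $\{a_1,\dots,a_{j-1}\}$, submodularity yields a marginal of at least $w_j$; the other inequality is the same argument with $N_2$ in place of $N_1$. Crucially, $Y_1 + Y_2 = v(S^*)$. It follows that $\EX[Y_1] = \EX[Y_2] = v(S^*)/2$, the variance is $V = \tfrac14 \sum_j w_j^2 \le \tfrac14 (\max_j w_j) \sum_j w_j \le \tfrac14\, v(i^*)\, v(S^*)$, and the per-coordinate range is $C = \max_j w_j \le v(i^*)$. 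Substituting the hypothesis $v(i^*) \le \tfrac{\beta}{z}\cdot\tfrac{e-1}{e}\cdot(\tfrac12 - \delta)\, v(S^*)$ together with the deviation threshold $\delta\, v(S^*)$ into \Cref{thm:bernstein} produces exactly the stated upper bound on $\Pr[\,|Y_1 - v(S^*)/2| \ge \delta\, v(S^*)\,]$. On the complement, $(\tfrac12 - \delta)v(S^*) \le Y_1 \le (\tfrac12 + \delta)v(S^*)$, so $Y_2 = v(S^*) - Y_1 \ge (\tfrac12 - \delta)v(S^*)$ on the same event, and both $v(S_1^*) \ge Y_1$ and $v(S_2^*) \ge Y_2$ exceed $(\tfrac12 - \delta)v(S^*)$ with the claimed probability.

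\smallskip

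The main obstacle I anticipate is identifying the correct linear proxy rather than any technical computation. The naive choice $\sum_j X_j\, v(a_j)$ neither lower-bounds $v(S_1^*)$ in general nor has its mean equal to $v(S^*)/2$, since $\sum_j v(a_j)$ may overshoot $v(S^*)$. The telescoping marginals $w_j$ are what simultaneously give $\sum_j w_j = v(S^*)$ with equality, supply both bounds $v(S_i^*) \ge Y_i$ via a single submodularity argument, and couple the two failure events $\{Y_1 < (\tfrac12-\delta)v(S^*)\}$ and $\{Y_2 < (\tfrac12-\delta)v(S^*)\}$ into the symmetric event $\{|Y_1 - \EX[Y_1]| \ge \delta\, v(S^*)\}$; this coupling is precisely what accounts for the factor $2$ (rather than $4$ from a naive union bound) inside the exponential of the stated probability.
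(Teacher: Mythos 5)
Your proposal is correct and follows essentially the same route as the paper's proof: the same telescoping marginals $w_j$ summing exactly to $v(S^*)$, the same reduction of $N_1$-membership to independent Bernoulli$(1/2)$ indicators, the same submodularity argument giving $v(S_1^*)\ge \sum_j X_j w_j$ and $v(S_2^*)\ge \sum_j(1-X_j)w_j$, and the same variance bound fed into Bernstein's inequality with the two-sided deviation event covering both tails at once. The only cosmetic difference is that the paper sorts $S^*$ by decreasing marginal contribution, which, as your argument implicitly shows, is not actually needed since $w_j\le v(a_j)\le v(i^*)$ holds for any fixed ordering.
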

\begin{proof}
Fix $S^* = \{a_1,\dots, a_\ell\}$ to be the winning set of an optimal offline solution (arbitrarily chosen if there is more than one). Let $S_1^*=N_1\cap S^*$, $S_2^*=N_2\cap S^*$. We can rearrange the agents in the optimal solution $S^*= \{a_1,\dots, a_\ell\}$, so that they are sorted according to decreasing marginal contributions, i.e., $i\in \argmax_j v(\{a_1,\dots, a_{i-1}\} \cup \{j\}) - v(\{a_1,\dots, a_{i-1}\})$ for $i>1$, and for $i=1$, $v(a_1) = \max_j v(\{a_j\})$.
Let $w_i$ denote the marginal contribution of $a_i$ w.r.t the set $\{a_1,\dots, a_{i-1}\}$.
Since the agents are assumed to arrive in a uniformly random order, we have that each agent is in $N_1$ with probability $\frac{1}{2}$ independently of other agents. Consider the random variables $X_1, \dots, X_\ell$ corresponding to the agents of the optimal solution $S^*$, defined as follows: for $i=1,\dots, \ell$, $X_i=w_i$ when $a_i \in N_1$ and $X_i=0$ otherwise. The previous discussion implies that $X_i$ takes the value $w_i$ with probability $\frac{1}{2}$. Next, we define $X = \sum_{i=1}^\ell X_i$. Symmetrically, consider the random variables $\bar{X_1}, \dots, \bar{X_\ell}$, where  $\bar{X_i}=w_i$ when $a_i \in N_2$ and $\bar{X_i}=0$ otherwise and define $\bar{X}=\sum_{i=1}^\ell \bar{X_i}$.  Observe that $\EX[X]=\frac{1}{2}\cdot v(S^*)$ and due to submodularity, $v(S_2^*) \geq \bar{X}$. Moreover, we have that
\begin{equation}\label{eq:var-ub}
\mathrm{Var}[X]=\sum\limits_{i=1}^{\ell}\mathrm{Var}[X_i]=\sum\limits_{i=1}^{\ell}\frac{w_i^2}{4}\leq \sum\limits_{i=1}^{\ell}\frac{w_i v(i^*)}{4}=\frac{v(S^*) v(i^*)}{4}\leq \frac{\beta}{4 z}\frac{e-1}{e}(\frac{1}{2}-\delta) v(S^*)^2    
\end{equation}
The first equality follows from the fact that $X_1,\dots,X_{\ell}$ are independent. Then, the first inequality holds since $v(i^*)=\max_{i} w_i$, whereas the second inequality follows by the assumed upper bound on $v(i^*)$. Finally, it holds that $X+\bar{X} = \sum_{i\in [\ell]} w_i=  v(S^*)$.
This implies that $v(S_2^*)+X\geq v(S^*)\Rightarrow v(S_2^*)\geq v(S^*)-X$. 

\noindent Repeating that  $X = \sum\limits_{i=1}^\ell X_i$, with $\mathbb{E}[X]=\frac{1}{2}v(S^*)$ it holds that $v(S_1^*) \geq X$ and $v(S_2^*)\geq v(S^*) - X$,  due to submodularity. Let $k=\frac{X}{v(S^*)}$ with $\EX[k]=\frac{1}{2} $ and $\mathrm{Var}[k]=\frac{\mathrm{Var}[X]}{v(S^*)^2}$. We also have that
\begin{align*}
   Pr(|k-\EX[k]|\geq \delta) &\leq 2 \exp{\bigg(-\frac{\delta^2/ 2 }{\mathrm{Var}[k]+\frac{\beta}{z}\cdot\frac{e-1}{e}\cdot(\frac{1}{2}-\delta)\cdot \frac{\delta}{3}}\bigg)}\\& \leq 2 \exp{\bigg(-\frac{\delta^2/ 2 }{(\frac{\beta}{4\cdot z}\cdot\frac{e-1}{e}\cdot(\frac{1}{2}-\delta)+\frac{\beta}{z}\cdot\frac{e-1}{e}\cdot(\frac{1}{2}-\delta)\cdot \frac{\delta}{3})}\bigg)}=\\& =2 \exp{\bigg(-\frac{\delta^2/ 2 }{(\frac{\beta}{z}\cdot\frac{e-1}{e}\cdot(\frac{1}{2}-\delta))\cdot(\frac{1}{4}+ \frac{\delta}{3})}\bigg)}.
\end{align*}
Here, the first inequality follows from  \Cref{thm:bernstein}, while the second inequality is due to \eqref{eq:var-ub}. By rearranging, we obtain
\begin{equation*}
    Pr(\EX[k]-\delta \leq k\leq \EX[k]+\delta)\geq 1-2 \exp{\bigg(-\frac{\delta^2/ 2 }{(\frac{\beta}{z}\cdot\frac{e-1}{e}\cdot(\frac{1}{2}-\delta))\cdot(\frac{1}{4}+ \frac{\delta}{3})}\bigg)}.
\end{equation*}
\noindent Hence, with probability at least  $1-2 \exp{\big(-\frac{\delta^2/ 2 }{(\frac{\beta}{z}\cdot\frac{e-1}{e}\cdot(\frac{1}{2}-\delta))\cdot(\frac{1}{4}+ \frac{\delta}{3})}\big)}$, it holds that $(\frac{1}{2}-\delta)\cdot v(S^*)\leq X\leq (\frac{1}{2}+\delta)\cdot v(S^*)$. Therefore, $v(S_1^*)\geq(\frac{1}{2}-\delta)\cdot v(S^*) $ and $v(S_2^*)\geq v(S^*)-(\frac{1}{2}+\delta)\cdot v(S^*)=(\frac{1}{2}-\delta)\cdot v(S^*)$, also hold with probability at least $1-2 \exp{\left(-\frac{\delta^2/ 2 }{(\frac{\beta}{z}\cdot\frac{e-1}{e}\cdot(\frac{1}{2}-\delta))\cdot(\frac{1}{4}+ \frac{\delta}{3})}\right)}$, which completes the proof.

\end{proof}

We can now prove an approximation guarantee of \Cref{mechanism:mono-sm-withoutpred}.
\begin{lemma}\label{lem:mechanism-without-prediction}
    \Cref{mechanism:mono-sm-withoutpred}, has an approximation ratio of:
    \begin{align*} 
    \min\left\{
  \frac{q\,\beta\,(e-1)(1-2\delta)}{2e^{2}z},
  \frac{(1-q)\,\tilde{p}\,(1-2\delta-2\beta)}{2},
  \frac{(1-q)\,\tilde{p}\,(z-1)\,\beta\,(e-1)(1-2\delta)}{2ez}
\right\}
     \end{align*}
     in expectation, where $\tilde{p}= 1 - 2\exp\!\left(
    -\frac{12\delta^{2}ez}{\beta\,(e-1)\,(1-2\delta)\,(3+4\delta)}\right).$   
\end{lemma}
\begin{proof}
    We will discriminate between the same three cases that we did for the analysis of \Cref{mechanism:mono-sm-pred}.
\smallskip

\noindent\textbf{Case 1:} \underline{Agent $i^*$ has significant value, i.e.: $v(i^*) > \frac{\beta}{z}\cdot \frac{e-1}{e}\cdot (\frac{1}{2}-\delta) \cdot v(S^*)$.}\\[4pt] 
For this case, we will utilize the fact that with probability $q$ we run Dynkin's algorithm, which gives a  factor of $1/e$ over single-agent solutions. Therefore, independently of what our mechanism does with the remaining probability, the expected value for $v(S)$ will be at least:
\[E[v(S)] \geq \frac{q}{e}\cdot \frac{\beta}{z}\cdot \frac{e-1}{e}\cdot \big(\frac{1}{2}-\delta\big)\cdot v(S^*)= \frac{q\,\beta\,(e-1)\,\left(1 - 2\delta\right)}{2e^{2}z}\cdot v(S^*).\]

\noindent\textbf{Case 2:} \underline{Each agent $i\in S_2^*\setminus S$ rejects the posted price and $v(i^*) \leq \frac{\beta}{z}\cdot \frac{e-1}{e}\cdot (\frac{1}{2}-\delta) \cdot v(S^*)$.}\\[4pt]  
Then,
\begin{align*}
        \sum\limits_{i\in S_2^*\setminus S}(v(S_i\cup\{i\})-v(S_i))<\sum\limits_{i\in S_2^*\setminus S}\frac{t}{B}\cdot  c_i = \frac{\beta\cdot v(T_1)}{B}\cdot B \le \beta \cdot  v(S^*),
\end{align*}
where the first inequality holds, due to the fact that each agent in $S_2^*\setminus S$ rejected the posted price offered by the mechanism and the second from the fact that the partial solution from $N_1$ is upper bounded by the optimal solution. This inequality leads us to the following one:
\begin{align*}
    v(S_2^*)-v(S)\le \beta \cdot v(S^*)\Leftrightarrow v(S)\geq v(S_2^*)-\beta \cdot v(S^*).
\end{align*}
By plugging the bound of \Cref{lem:Hoeffding} into the above equation and rearranging the terms we get
\[v(S)\geq \frac{1-2\delta -2 \beta}{2} \cdot v(S^*),\]
with probability at least $1-2 \exp{\left(-\frac{\delta^2/ 2 }{(\frac{\beta}{z}\cdot\frac{e-1}{e}\cdot(\frac{1}{2}-\delta))\cdot(\frac{1}{4}+ \frac{\delta}{3})}\right)}$. \\

\noindent\textbf{Case 3:} \underline{There is a $j\in S_2^*\setminus S$ such that $c_j\leq\bar{p}_j$ but $\bar{p}_j>B'$ and $v(i^*) \leq \frac{\beta}{z}\cdot \frac{e-1}{e}\cdot (\frac{1}{2}-\delta) \cdot v(S^*)$.}\\[4pt] 
Let $j\in S_2^*\setminus S$ be the first agent that has this property.

\noindent We first show that  $B'\leq \frac{B}{z}$. To see this:
    \begin{align*}
        B'< \bar{p}_j =\frac{B}{t}\cdot(v(S_j\cup \{j\})-v(S_j)) \le \frac{B}{t}\cdot v(j)\leq \frac{B\cdot v(j)}{\beta \cdot v(T_1)} \leq \frac{B}{z}, 
    \end{align*}
 where the last inequality holds due to the fact that $v(j)\leq v(i^*)\leq \frac{\beta}{z}\cdot \frac{e-1}{e}\cdot (\frac{1}{2}-\delta)\cdot  v(S^*)\leq \frac{\beta \cdot v(T_1)}{z}$ and the event of \Cref{lem:Hoeffding}. We have also used submodularity, since $v(j|S_j)\leq v(j)$.

\noindent Since $B'\leq \frac{B}{z}$, we have that, $\sum\limits_{i\in S} p_i\geq B-\frac{B}{z}=\frac{(z-1)\cdot B}{z}$ and also:
    \begin{align*}
        \frac{z-1}{z}\cdot B\leq\sum\limits_{i\in S} p_i=\sum\limits_{i\in S}\frac{B}{t}(v(S_i\cup \{i\})-v(S_i))\leq B\cdot \frac{v(S)}{\beta \cdot v(T_1)},
    \end{align*}
    which leads us to $v(S)\geq \frac{(z-1)}{z}\cdot \beta \cdot v(T_1)\geq \frac{(z-1)\beta (e-1)}{ez}\cdot  v(S_1^*)$. This, via \Cref{lem:Hoeffding}, gives us that  $v(S)\geq \frac{(z-1)\beta (e-1)}{ez}\cdot (\frac{1}{2}-\delta) \cdot v(S^*) =\frac{(z-1)\beta(e-1)(1-2\delta)}{2ez}\cdot v(S^*)$ with probability at least $1-2 \exp{\left(-\frac{\delta^2/ 2 }{(\frac{\beta}{z}\cdot\frac{e-1}{e}\cdot(\frac{1}{2}-\delta))\cdot(\frac{1}{4}+ \frac{\delta}{3})}\right)}$. \\[4pt]
Piecing everything together,

\begin{itemize}[leftmargin=*]
     \item For the case that $v(i^*) \leq \frac{\beta}{z}\cdot \frac{e-1}{e}\cdot (\frac{1}{2}-\delta)\cdot  v(S^*)$, with probability $1-q$ we run the mechanism below line 4, which gives the factor \[\min\bigg\{
   \frac{(1-q)\,\tilde{p}\,(1-2\delta-2\beta)}{2},\  \frac{(1-q)\,\tilde{p}\,(z-1)\beta(e-1)\left(1 - 2\delta\right)}{2ze}\bigg\},\] where $\tilde{p}=1-2\ \exp{\left(-\frac{\delta^2/ 2 }{(\frac{\beta}{z}\cdot\frac{e-1}{e}\cdot(\frac{1}{2}-\delta))\cdot(\frac{1}{4}+ \frac{\delta}{3})}\right)}$. 

   \item For the case that $v(i^*) > \frac{\beta}{z}\cdot\frac{e-1}{e}\cdot(\frac{1}{2}-\delta)\cdot v(S^*)$,  with probability $q$ we run Dynkin's algorithm , which gives the factor
    \begin{equation}
         \frac{q\,\beta\,(e-1)\,\left(1 - 2\delta\right)}{2ze^{2}}. \tag*{\qedhere}
    \end{equation}
\end{itemize} \end{proof}

\noindent
By combining \Cref{lem:mechanism-with-prediction} and \Cref{lem:mechanism-without-prediction}, the proof of \Cref{thm:approx-ratio-conv-mech} follows.

\section{A Sampling-Calibrated Mechanism with Prediction Guidance}\label{sec:mono-sm-sampling}
In this section, we present a second learning-augmented mechanism for monotone submodular valuations that takes a different approach.
In this mechanism, rather than randomly deciding between using a prediction-based mechanism and one that ignores the prediction, we allow the designer to control the trade-off between prediction reliance and sampling through a parameter $k$, which determines the size of the sample set. The mechanism then combines the information obtained from sampling with the prediction to offer posted prices. When the designer places greater trust in the quality of the prediction, a smaller sample size may suffice, as the reduced accuracy from limited sampling can be compensated for by the predicted information. This mechanism represents a more 'risk-averse' design approach; although it lacks the strong performance guarantees of \Cref{mechanism:mono-sm-conv}, its outcomes are less dependent on randomness. In particular, under a poor prediction, \Cref{mechanism:mono-sm-conv} yields a meaningful guarantee only if the randomized choice favors \Cref{mechanism:mono-sm-withoutpred}. In contrast, \Cref{mechanism:mono-sm2} provides a non-trivial approximation guarantee even when the prediction is entirely uninformative, as it utilizes the information gained through sampling. 

\begin{mechanism}[!ht]
\caption{Online mechanism for a monotone submodular function $v$, parameterized by $k\geq 1$.}\label{mechanism:mono-sm2}
\DontPrintSemicolon
\setstretch{1.1}
\SetInd{4pt}{7pt} 
\begin{algolabel}{}[\textbf{With probability} $q$\,\textbf{:} \SetNoFillComment {\small \tcc*[f]{Dynkin's Algorithm}}]
Sample the first $\floor{n/e}$ agents; let $i^*$ be the most valuable among them.

From the remaining agents, return the first agent $j$ for whom $v(j)\geq v(i^*)$ and pay her $B$.
\end{algolabel}

\begin{algolabel}{}[\textbf{With probability $1-q$\,:}]

Draw $\xi_1$ from the binomial distribution $\mathcal{B}(n,\frac{1}{k})$.

Let $N_1$ be the set of the first $\xi_1$ agents that arrive.  \

Let $T_1$ be a $\frac{e-1}{e}$-approximate solution to \eqref{eq:opt} on $N_1$, e.g., using the algorithm of \citet{sviridenko2004}.  

Set $N_2=N \setminus N_1, B'=B,S=\emptyset$ and $\vec{p}=\vec{0}.$

Set $t= a\cdot \omega + \beta\cdot v(T_1)$

\For{each round  as agent $i \in N_2$ arrives}{

\If{$c_i \leq \bar{p}_i:=\frac{B}{t} v(i \,|\, S) $ and $B'-\bar{p}_i \geq 0$}{

    Add $i$ to $S$, set $p_i=\bar{p}_i$ and $B'=B'-p_i$.
}
}
\end{algolabel}

\Return winning set $S$ and payments $\vec{p}$.
\end{mechanism}\smallskip
\begin{lemma}\label{lem:truth-bf-mono2}
\Cref{mechanism:mono-sm2} is truthful, budget-feasible and individually rational.    
\end{lemma}
\noindent The proof follows the same approach as the proof of \Cref{lem:truth-bf 1}. The following theorem states the approximation guarantee of \Cref{mechanism:mono-sm2}.
\begin{restatable}{theorem}{thmapproxmonosmtwo}\label{thm:approx-mono-sm2}
    \Cref{mechanism:mono-sm2}, given access to a prediction $\omega=(1-\varepsilon)\cdot v(S^*)$ with $\varepsilon\in [0,1)$, has an approximation guarantee of:
    \begin{align*}
        &\min\left\{ \frac{q \left(ek a(1-\varepsilon) + \beta (e-1) \left( 1- k\delta \right) \right)}{k e^2 z},
\right.\\ & \left. (1-q)  \tilde{p}  \min\left(\frac{(k - 1) - k\left(\delta + a(1 - \varepsilon) + \beta\right)}{k}
, \frac{(z - 1)\left( a(1 - \varepsilon)  e k \;+\; \beta (e - 1)(1 - k\delta) \right)}{z e k}
 \right)\right\},
    \end{align*}
    where $\tilde{p}=1-2 \displaystyle\exp{\left(-\frac{3 e k^3 z \, \delta^2}{2 \left(3(k - 1) + \delta k^2\right)\left(e k a(1 - \varepsilon) + \beta (e - 1)(1 - k \delta)\right)}
\right)}$.
\end{restatable}
From the above theorem, we derive the following corollary regarding the in-expectation consistency and robustness guarantees of \Cref{mechanism:mono-sm2}.

\begin{corollary}\label{corolarry:cons-rob-mono-sm2}
    \Cref{mechanism:mono-sm2}, given access to a possibly erroneous prediction $\omega=(1-\varepsilon)\cdot v(S^*)$, with $\varepsilon\in [0,1)$, achieves an in-expectation consistency-robustness tradeoff of $95$ and $280$ by setting $p=0.68$, $a=0.06$, $z=2.15$, $\beta=0.27$, $\delta=0.22$ and $k=2.5$.
\end{corollary}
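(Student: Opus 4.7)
The plan is to directly substitute the stated parameter choices and the extremal values of the error parameter into the three-term minimum from \Cref{thm:approx-mono-sm2} and verify that the resulting reciprocals are approximately $95$ and $280$. Since the corollary's symbol ``$p$'' is not an input of \Cref{mechanism:mono-sm2}, I read it as the mixing probability $q$ (the probability of running Dynkin's subroutine). Consistency corresponds to $\varepsilon=0$; for robustness, since every term in the formula of \Cref{thm:approx-mono-sm2} is non-decreasing in $a(1-\varepsilon)$, the worst case is attained as $\varepsilon\to 1^-$, which one obtains by formally dropping the $a(1-\varepsilon)$ contribution in every quantity.

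For the consistency bound, let me introduce the auxiliary quantities $C = a + \beta\cdot\tfrac{e-1}{e}\bigl(\tfrac{1}{k}-\delta\bigr)$ and $D = a + \beta$, which capture the recurring sub-expressions. With the stated values $a=0.06$, $\beta=0.27$, $k=2.5$, $\delta=0.22$, $z=2.15$, one gets $C\approx 0.0907$ and $D=0.33$. The first term $\tfrac{q}{ez}\,C$ evaluates to approximately $0.01055 \approx 1/95$. The concentration probability $\tilde p$ has exponent roughly $-1.83$, giving $\tilde p \approx 0.68$. Inside the second term, one compares $\tfrac{k-1}{k}-\delta-D = 0.05$ against $\tfrac{z-1}{z}\,C \approx 0.0485$; the latter is smaller, so combining with $(1-q)\tilde p \approx 0.218$ yields approximately $0.01056$, again $\approx 1/95$. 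The overall minimum is therefore $\approx 1/95$.

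For the robustness bound, the limit $\varepsilon\to 1$ sends $C \to \beta\cdot\tfrac{e-1}{e}\bigl(\tfrac{1}{k}-\delta\bigr) \approx 0.0307$ and $D\to \beta = 0.27$. The first term becomes $\tfrac{q}{ez}\,C \approx 0.00358 \approx 1/280$. In the second term, the exponent of $\tilde p$ is now about $-5.4$, so $\tilde p$ is essentially $1$; the inner min compares $\tfrac{k-1}{k}-\delta-D = 0.11$ with $\tfrac{z-1}{z}\,C \approx 0.01643$, and $(1-q)\tilde p \cdot 0.01643 \approx 0.00521$ -- a looser factor than the first term. The overall minimum is therefore dictated by the first term and equals $\approx 0.00358$, yielding robustness~$\approx 280$.

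The only non-routine aspect is verifying that the chosen parameters are cooperative in \emph{both} regimes: in the consistency case, $\tilde p$ must be bounded safely away from $0$ (it is, at $\approx 0.68$), and the two branches inside the second min must be nearly balanced so that neither becomes artificially binding ($0.05$ vs.\ $0.0485$); in the robustness case, the first (prediction-free) term must be the binding one, which follows because the second term's guarantee degrades only mildly as $\varepsilon\to 1$ while the first degrades proportionally to $\beta\cdot\tfrac{e-1}{e}(\tfrac{1}{k}-\delta)$. Once these sanity checks on the parameter choices are confirmed, the corollary is a direct numerical evaluation of the general bound in \Cref{thm:approx-mono-sm2}.
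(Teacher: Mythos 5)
Your proposal is correct and matches the paper's (implicit) proof: the corollary is stated as a direct numerical instantiation of \Cref{thm:approx-mono-sm2}, with the corollary's ``$p$'' indeed playing the role of the mixing probability $q$, and your evaluations at $\varepsilon=0$ and $\varepsilon\to 1$ reproduce the claimed $\approx 1/95$ and $\approx 1/280$ guarantees. One small imprecision: your justification that ``every term is non-decreasing in $a(1-\varepsilon)$'' is literally false, since both $\tilde p$ and the branch $\tfrac{k-1}{k}-\delta-(a(1-\varepsilon)+\beta)$ move in the opposite direction; the reduction to $\varepsilon\to 1$ is still valid here because that branch is never binding and the product $\tilde p\cdot\tfrac{z-1}{z}\bigl(a(1-\varepsilon)+\beta\tfrac{e-1}{e}(\tfrac{1}{k}-\delta)\bigr)$ is increasing in $a(1-\varepsilon)$ for these parameter values (alternatively, lower-bound $\tilde p$ by its value at $\varepsilon=0$), so only the wording, not the conclusion, needs fixing.
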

 
\noindent We present below the proof of \Cref{thm:approx-mono-sm2}, following the same approach as in the proof of \Cref{lem:mechanism-without-prediction}.
\smallskip

\begin{proof}[Proof of Theorem \ref{thm:approx-mono-sm2}]
We distinguish three cases.
\newline
\textbf{Case 1:} \underline{Agent $i^*$ has significant value, i.e.: $v(i^*) >\frac{1}{z}\cdot ( a\cdot (1-\varepsilon)+ \beta\cdot \frac{e-1}{e}\cdot(\frac{1}{k}-\delta))\cdot v(S^*)$.}\\[4pt]
For this case, we will utilize the fact that with probability $q$ we run Dynkin's algorithm, which gives a  factor of $1/e$ over single-agent solutions. Therefore, independently of what our mechanism does with the remaining probability, the expected value for $v(S)$ will be at least:
\begin{align*}
    E[v(S)] &\geq \frac{q}{e\cdot z}\cdot ( a\cdot (1-\varepsilon)+ \beta\cdot \frac{e-1}{e}\cdot(\frac{1}{k}-\delta))\cdot v(S^*) \\&=\frac{q}{ez}(\frac{eka(1-\varepsilon)+\beta(e-1)(1-k\delta)}{ke})=\frac{q(eka(1-\varepsilon)+\beta (e-1)(1-k\delta))}{ke^2z}.
\end{align*}
The following lemma, is the analogous of \Cref{lem:Hoeffding}.
\begin{lemma}\label{lem:Hoeffding2}
When $v(i^*) \leq \frac{1}{z}\cdot ( a\cdot (1-\varepsilon)+\beta\cdot \frac{e-1}{e}\cdot(\frac{1}{k}-\delta))\cdot v(S^*)$, for all $i\in N$ then with probability at least $1-2\ \exp{\Big(-\frac{\delta^2/ 2 }{(\frac{k-1}{k^2}+\frac{\delta}{3})\cdot\frac{1}{z}\cdot ( a\cdot (1-\varepsilon)+\beta\cdot \frac{e-1}{e}\cdot (\frac{1}{k}-
\delta))}\Big)}$, it holds that $(\frac{1}{k}-\delta)\cdot v(S^*)\leq v(S_1^*)$, $(\frac{k-1}{k}-\delta)\cdot v(S^*)\leq v(S_2^*) $, where $\delta $ is a positive constant, that we will set to a certain value later.
\end{lemma}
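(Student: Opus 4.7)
The plan is to closely follow the proof of \Cref{lem:Hoeffding}, adapted to the key structural difference in \Cref{mechanism:mono-sm2}: each agent is placed in $N_1$ independently with probability $1/k$ (rather than $1/2$), since $\xi_1 \sim \mathcal{B}(n,1/k)$. This changes the mean, variance and range of the per-agent random variables feeding into Bernstein's inequality, but the overall structure is preserved.

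First, I would fix the optimal solution $S^* = \{a_1,\ldots,a_\ell\}$ ordered by decreasing marginal contributions, write $w_i$ for the marginal contribution of $a_i$ with respect to $\{a_1,\ldots,a_{i-1}\}$ so that $\sum_i w_i = v(S^*)$ and $w_i \leq v(i^*)$, and introduce the random variables $X_i := w_i \cdot \mathbb{1}[a_i \in N_1]$. Since the arrival order is uniformly random and $\xi_1 \sim \mathcal{B}(n,1/k)$, the $X_i$ are independent, with $\EX[X_i] = w_i/k$ and $\mathrm{Var}[X_i] = w_i^2 (k-1)/k^2$. Setting $X := \sum_{i=1}^\ell X_i$, I get $\EX[X] = v(S^*)/k$, and submodularity gives $v(S_1^*) \geq X$ and $v(S_2^*) \geq v(S^*) - X$.

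Next, I would bound the variance of $X$ using $w_i \leq v(i^*)$ together with the hypothesis on $v(i^*)$:
\[ \mathrm{Var}[X] \,=\, \frac{k-1}{k^2}\sum_i w_i^2 \,\leq\, \frac{k-1}{k^2} v(i^*) v(S^*) \,\leq\, \frac{k-1}{k^2} \cdot \frac{1}{z}\Bigl(a(1-\varepsilon) + \beta\tfrac{e-1}{e}\bigl(\tfrac{1}{k}-\delta\bigr)\Bigr) v(S^*)^2. \]
Then I would apply \Cref{thm:bernstein} to the normalized sum $Y := X/v(S^*)$, whose summands have range bounded by $v(i^*)/v(S^*) \leq \tfrac{1}{z}(a(1-\varepsilon)+\beta\tfrac{e-1}{e}(\tfrac{1}{k}-\delta))$ and total variance bounded by the expression above divided by $v(S^*)^2$. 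The denominator $V_n + C\delta/3$ of the Bernstein exponent then factors cleanly as $\tfrac{1}{z}(a(1-\varepsilon)+\beta\tfrac{e-1}{e}(\tfrac{1}{k}-\delta)) \cdot (\tfrac{k-1}{k^2}+\tfrac{\delta}{3})$, yielding exactly the concentration bound stated in the lemma.

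Finally, on the event $|Y - 1/k| \leq \delta$, which has the asserted probability, I would obtain $X \geq (\tfrac{1}{k}-\delta) v(S^*)$ and $X \leq (\tfrac{1}{k}+\delta) v(S^*)$ simultaneously, and hence $v(S_1^*) \geq (\tfrac{1}{k}-\delta) v(S^*)$ and $v(S_2^*) \geq v(S^*) - X \geq (\tfrac{k-1}{k}-\delta) v(S^*)$, as required. The main (minor) obstacle is bookkeeping in the Bernstein step: one has to check that the per-summand range $C$ and the variance bound combine to produce exactly the denominator stated in the lemma, so that the concentration probability matches the one claimed; the rest is a direct adaptation of the $k=2$ case handled in \Cref{lem:Hoeffding}.
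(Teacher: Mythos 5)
Your proposal is correct and follows essentially the same route as the paper's own proof: the same indicator random variables $X_i = w_i\cdot\mathbb{1}[a_i\in N_1]$ with mean $w_i/k$ and variance $w_i^2(k-1)/k^2$, the same variance bound via $w_i\le v(i^*)$ and the hypothesis on $v(i^*)$, the same application of Bernstein's inequality to the normalized sum, and the same use of submodularity to pass from $X$ to $v(S_1^*)$ and $v(S_2^*)$. No gaps; the bookkeeping in the Bernstein denominator works out exactly as you anticipate.
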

\begin{proof}[Proof of Lemma \ref{lem:Hoeffding2}]
Fix $S^* = \{a_1,\dots, a_\ell\}$ to be the winning set of an optimal offline solution (arbitrarily chosen if there is more than one). Let $S_1^*=N_1\cap S^*$, $S_2^*=N_2\cap S^*$. We can rearrange the agents in the optimal solution $S^*= \{a_1,\dots, a_\ell\}$, so that they are sorted according to decreasing marginal contributions, i.e., $i\in \argmax_j v(\{a_1,\dots, a_{i-1}\} \cup \{j\}) - v(\{a_1,\dots, a_{i-1}\})$ for $i>1$, and for $i=1$, $v(a_1) = \max_j v(\{a_j\})$.
Let $w_i$ denote the marginal contribution of $a_i$ w.r.t the set $\{a_1,\dots, a_{i-1}\}$.
Since the agents are assumed to arrive in a uniformly random order, we have that each agent is in $N_1$ with probability $\frac{1}{k}$ independently of other agents. Consider the random variables $X_1, \dots, X_\ell$ corresponding to the agents of the optimal solution $S^*$, defined as follows: for $i=1,\dots, \ell$, $X_i=w_i$ when $a_i \in N_1$ and $X_i=0$ otherwise. The previous discussion implies that $X_i$ takes the value $w_i$ with probability $\frac{1}{k}$. Next, we define $X = \sum_{i=1}^\ell X_i$. Symmetrically, consider the random variables $\bar{X_1}, \dots, \bar{X_\ell}$, where  $\bar{X_i}=w_i$ when $a_i \in N_2$ and $\bar{X_i}=0$ otherwise and define $\bar{X}=\sum_{i=1}^\ell \bar{X_i}$.  Observe that $\EX[X]=\frac{1}{k}\cdot v(S^*)$ and due to submodularity, $v(S_2^*) \geq \bar{X}$. Moreover, we have that since the random variables $X_1,\dots,X_{\ell}$ are independent, $\mathrm{Var}[X]=\sum\limits_{i=1}^{\ell}\mathrm{Var}[X_i]=\sum\limits_{i=1}^{\ell}\frac{(k-1)\cdot w_i^2}{k^2}\leq \sum\limits_{i=1}^{\ell}\frac{(k-1)\cdot w_i\cdot v(i^*)}{k^2}=\frac{k-1}{k^2}\cdot\frac{1}{z}\cdot (a\cdot (1-\varepsilon)+\beta\cdot \frac{e-1}{e}\cdot (\frac{1}{k}-
\delta))\cdot v(S^*)^2$. Finally, it holds that $X+\bar{X} = \sum_{i\in [\ell]} w_i=  v(S^*)$.
This implies that $v(S_2^*)+X\geq v(S^*)\Rightarrow v(S_2^*)\geq v(S^*)-X$. 

\noindent Repeating that  $X = \sum\limits_{i=1}^\ell X_i$, with $\mathbb{E}[X]=\frac{1}{k}v(S^*)$ it holds that $v(S_1^*) \geq X$ and $v(S_2^*)\geq v(S^*) - X$,  due to submodularity. Let $m=\frac{X}{v(S^*)}$ with $\EX[m]=\frac{1}{k} $ and $\mathrm{Var}[m]=\frac{\mathrm{Var}[X]}{v(S^*)^2}$. 

Using Bernstein's inequality (\Cref{thm:bernstein}) and the obtained upper bound on the variance of $X,$ we bound the distance of $m$ from its expectation as follows:
\begin{align*}
    &Pr(|m-\EX[m]|\geq \delta)\leq 2 \exp{\left(-\frac{\delta^2/ 2 }{\mathrm{Var}[m]+\frac{1}{z}( a\cdot (1-\varepsilon)+ \beta\cdot\frac{e-1}{e}\cdot(\frac{1}{k}-\delta))\cdot \frac{\delta}{3}}\right)}\\& \leq  2 \exp{\left(-\frac{\delta^2/ 2 }{\frac{k-1}{k^2}\cdot \frac{1}{z}\cdot (a\cdot (1-\varepsilon)+\beta\cdot \frac{e-1}{e}\cdot (\frac{1}{k}-
\delta))+\frac{1}{z}\cdot( a\cdot (1-\varepsilon)+ \beta\cdot\frac{e-1}{e}\cdot(\frac{1}{k}-\delta))\cdot \frac{\delta}{3}} \right)} =\\&=2 \exp{\left(-\frac{\delta^2/ 2 }{(\frac{k-1}{k^2}+\frac{\delta}{3})\cdot \frac{1}{z}\cdot ( a\cdot (1-\varepsilon)+\beta\cdot \frac{e-1}{e}\cdot (\frac{1}{k}-\delta))}\right)}\Rightarrow\\ &Pr(\EX[m]-\delta \leq m\leq \EX[m]+\delta)\geq 1-2 \exp{\left(-\frac{\delta^2/ 2 }{(\frac{k-1}{k^2}+\frac{\delta}{3})\cdot \frac{1}{z}\cdot ( a\cdot (1-\varepsilon)+( \beta\cdot \frac{e-1}{e}\cdot (\frac{1}{k}-\delta))}\right)}.\\
\end{align*}

\noindent Hence, with probability greater or equal to $1-2 \exp{\left(-\frac{\delta^2/ 2 }{(\frac{k-1}{k^2}+\frac{\delta}{3})\cdot\frac{1}{z}\cdot (a\cdot (1-\varepsilon)+ \beta\cdot \frac{e-1}{e}\cdot (\frac{1}{k}-
\delta))}\right)}$, it holds that $(\frac{1}{k}-\delta)\cdot v(S^*)\leq X\leq (\frac{1}{k}+\delta)\cdot v(S^*)$. So, $v(S_1^*)\geq(\frac{1}{k}-\delta)\cdot v(S^*) $ and $v(S_2^*)\geq v(S^*)-(\frac{1}{k}+\delta)\cdot v(S^*)=(\frac{k-1}{k}-\delta)\cdot v(S^*)$, with probability at least $1-2 \exp{\left(-\frac{\delta^2/ 2 }{(\frac{k-1}{k^2}+\frac{\delta}{3})\cdot\frac{1}{z}\cdot (a\cdot (1-\varepsilon)+ \beta\cdot \frac{e-1}{e}\cdot (\frac{1}{k}-\delta))}\right)}$, which completes the proof of the lemma.
\end{proof}

\noindent\textbf{Case 2:} \underline{All $i\in S_2^*\setminus S$ reject the posted price and  $v(i^*) \leq\frac{1}{z}\cdot ( a\cdot (1-\varepsilon)+ \beta\cdot \frac{e-1}{e}\cdot(\frac{1}{k}-\delta))\cdot v(S^*)$.}\\[4pt] Then,
\begin{align*}
        \sum\limits_{i\in S_2^*\setminus S}(v(S_i\cup\{i\})-v(S_i))<\sum\limits_{i\in S_2^*\setminus S}\frac{t}{B}\cdot  c_i = \frac{ a\cdot \omega + \beta\cdot v(T_1)}{B}\cdot B \le ( a\cdot (1-\varepsilon)+ \beta)\cdot  v(S^*),
\end{align*}
where the first inequality holds, due to the fact that each agent in $S_2^*\setminus S$ rejected the posted price offered by the mechanism and the second from the fact that $\omega=(1-\varepsilon)\cdot v(S^*)$ and that the partial solution from $N_1$ is upper bounded by the optimal solution. This inequality leads us to the following one:
\begin{align*}
    v(S_2^*)-v(S)\le ( a\cdot (1-\varepsilon)+\beta) \cdot v(S^*)\Leftrightarrow v(S)\geq v(S_2^*)-( a\cdot (1-\varepsilon)+ \beta) \cdot v(S^*).
\end{align*}
By plugging the bound of \Cref{lem:Hoeffding2} into the above equation and rearranging the terms we get
\[v(S)\geq \left(\frac{k-1}{k}-\delta -( a\cdot (1-\varepsilon)+ \beta)\right) \cdot v(S^*)=\frac{(k-1)-k(\delta+a(1-\varepsilon)+\beta)}{k}\cdot v(S^*)\]
with probability at least $1-2 \exp{\left(-\frac{\delta^2/ 2 }{(\frac{k-1}{k^2}+\frac{\delta}{3})\cdot\frac{1}{z}\cdot (a\cdot (1-\varepsilon)+ \beta\cdot \frac{e-1}{e}\cdot (\frac{1}{k}-
\delta))}\right)}$. \\

\noindent\textbf{Case 3:} \underline{There is an agent $j\in S_2^*\setminus S$ such that $c_j\leq \bar{p}_j$ but $\bar{p}_j >B'$ and also} \\ \underline{ $v(i^*) \leq\frac{1}{z}\cdot ( a\cdot (1-\varepsilon)+ \beta\cdot \frac{e-1}{e}\cdot(\frac{1}{k}-\delta))\cdot v(S^*)$.}\\[4pt]
Let $j\in S_2^*\setminus S$ be the first agent that has this property.

\noindent First we will show that  $B'\leq \frac{B}{z}$. To see this:
    \begin{align*}
        B'< \bar{p}_j =\frac{B}{t}(v(S_j\cup \{j\})-v(S_j))\leq \frac{B}{t} v(j)\leq \frac{B\cdot v(j)}{ a\cdot \omega+\beta\cdot  v(T_1)} \leq \frac{B}{z},
    \end{align*}
 where the last inequality holds due to the fact that $v(j)\leq v(i^*)\leq \frac{1}{z} (a\cdot (1-\varepsilon)+ \beta\cdot \frac{e-1}{e}\cdot(\frac{1}{k}-\delta))\cdot v(S^*)\leq \frac{a\cdot \omega+ \beta \cdot v(T_1)}{z}$, given the event of \Cref{lem:Hoeffding2}. We have also used submodularity, since $v(j|S_j)\leq v(j)\leq v(i^*)$. 

\noindent Since $B'\leq \frac{B}{z}$, we have that, $\sum\limits_{i\in S} p_i\geq B-\frac{B}{z}=\frac{(z-1)}{z}\cdot B$ and also:
    \begin{align*}
        \frac{z-1}{z}\cdot B\leq\sum\limits_{i\in S} p_i=\sum\limits_{i\in S}\frac{B}{t}(v(S_i\cup \{i\})-v(S_i))\leq B\cdot \frac{v(S)}{ a\cdot \omega+\beta \cdot v(T_1)},
    \end{align*}
    which leads us to $v(S)\geq \frac{z-1}{z}\cdot( a\cdot \omega+ \beta \cdot v(T_1))\geq \frac{z-1}{z}\cdot ( a\cdot (1-\varepsilon)\cdot v(S^*)+ \beta \cdot \frac{e-1}{e}\cdot  v(S_1^*))$.  The latter, via \Cref{lem:Hoeffding2}, gives us that: 
    \begin{align*}        
    v(S)&\geq  \frac{z-1}{z}\cdot ( a\cdot (1-\varepsilon)+ \beta \cdot \frac{e-1}{e}\cdot (\frac{1}{k}-\delta) )\cdot v(S^*)\\&=\frac{(z-1)(a(1-\varepsilon)ek+\beta(e-1)(1-k\delta))}{zek}\cdot v(S^*), 
    \end{align*}
    with probability at least $1-2\exp{\left(-\frac{\delta^2/ 2 }{(\frac{k-1}{k^2}+\frac{\delta}{3})\cdot\frac{1}{z}\cdot (a\cdot (1-\varepsilon)+ \beta\cdot \frac{e-1}{e}\cdot (\frac{1}{k}- \delta))}\right)}$. \\
Piecing everything together we get the following,

\begin{itemize}[leftmargin=*]
       
    \item For the case that $v(i^*) \leq \frac{1}{z}\cdot (a\cdot (1-\varepsilon)+ \beta\cdot \frac{e-1}{e}\cdot(\frac{1}{k}-\delta))\cdot v(S^*)$, with probability $1-q$ we run the mechanism below line 4, which gives the factor \[(1-q) \cdot \tilde{p}\cdot  \min{\left(\frac{(k-1)-k(\delta+a(1-\varepsilon)+\beta)}{k}, \frac{(z-1)(a(1-\varepsilon)ek+\beta(e-1)(1-k\delta))}{zek} \right)},\]  
    where $\tilde{p}=1-2 \exp{\left(-\frac{\delta^2/ 2 }{(\frac{k-1}{k^2}+\frac{\delta}{3})\cdot\frac{1}{z}\cdot (a\cdot (1-\varepsilon)+ \beta\cdot \frac{e-1}{e}\cdot (\frac{1}{k}-\delta))}\right)}$.  
    \item For the case that $v(i^*) \ge\frac{1}{z}\cdot ( a\cdot (1-\varepsilon)+ \beta\cdot \frac{e-1}{e}\cdot(\frac{1}{k}-\delta))\cdot v(S^*)$ , with probability $q$ we run Dynkin's algorithm , which gives the factor
    \begin{equation}
        \frac{q(eka(1-\varepsilon)+\beta (e-1)(1-k\delta))}{ke^2z}. \tag*{\qedhere}
    \end{equation} 
\end{itemize} 
This concludes the proof of the theorem.
\end{proof}

\section{Mechanisms for the Non-Monotone Case}\label{sec:nonmono-submodular}
In this section we present randomised mechanisms for a general (possibly non-monotone) submodular utility function, in the secretary model of arrival. These mechanisms follow the same approach as \Cref{mechanism:mono-sm-conv,mechanism:mono-sm2} for the monotone case.   The difference is that for non-monotone submodular functions, it does not suffice to build a single solution in order to get any meaningful approximation guarantee. We circumvent this difficulty, by building two disjoint solutions simultaneously, which is an idea firstly introduced in \citet{amanatidis2022budget}. Since we are in the Random Order Model, any decision the mechanism makes upon the arrival of an agent must be irrevocable. Therefore,  we are not allowed to return the best of the two solutions built and instead, the mechanism selects equiprobably, which of the two will be returned \emph{beforehand}. For the performance analysis of the mechanism, we will follow the same reasoning as in \Cref{sec:mono-submodular}. Missing proofs from this section are in \ref{app:sec5}.

\subsection{The Non-Monotone Analogue of \Cref{mechanism:mono-sm-conv}}
Here we will present the analogue of \Cref{mechanism:mono-sm-conv}, for the non-monotone case. Again, this is a convex combination of two other budget-feasible mechanisms, where the first one builds a solution based solely on the information of the prediction $\omega$ and the second completely ignores the prediction and builds the solution based on information attained by sampling roughly the first $\frac{n}{2}$ agents that arrive. 
\begin{theorem} \label{thm:approx-ratio-nonmono-conv-mech}
    \Cref{mechanism:nonmono-sm-conv}  with access to a prediction $\omega=(1-\varepsilon)\cdot v(S^*)$ with $\varepsilon\in [0,1)$, achieves an in-expectation approximation guarantee of $(\tau \cdot f_1+(1-\tau)\cdot f_2)^{-1}$, where $f_1$  is the expected approximation ratio of \Cref{mechanism:nonmono-sm-pred} and is equal to 
    \[\min\left\{\frac{ap \; (1-\varepsilon)\;}{\gamma},\frac{(1-p)\; (1-(1-\varepsilon)\; 2a)}{4},\frac{(1-p)\; a\; (1-\varepsilon)\;(\gamma-1)}{2 \gamma}\right\} \]
    and $f_2$ is the expected approximation ratio of \Cref{mechanism:nonmono-sm-withoutpred}, which is 
    \[\min\left\{\frac{q \beta (1 - 2\delta)}{2 e^2z }
, \frac{(1 - q)\tilde{p} (1 - 2\delta - 4\beta)}{8}
, \frac{(1 - q)\tilde{p} \beta (z - 1)(1 - 2\delta)}{4 e z}
\right\}\,,\]
    with $\tilde{p}=1-2\cdot \displaystyle\exp{\left(-\frac{12 z e \, \delta^2}{\beta (1 - 2\delta)(3 + 4\delta)}
\right)}.$ 
\end{theorem}
From \Cref{thm:approx-ratio-nonmono-conv-mech}, we obtain the following two corollaries, which give us the in-expectation consistency and robustness guarantees of \Cref{mechanism:nonmono-sm-conv}.
\begin{corollary}\label{corolarry:consistency-nonmono-conv-mech}
    \Cref{mechanism:nonmono-sm-conv}, given access to a perfect prediction $\omega=v(S^*)$ achieves an in-expectation consistency factor of $(\tau\cdot \frac{1}{19}+(1-\tau)\cdot \frac{1}{445})^{-1}$, by setting $p=0.33$, $a=0.335$, $\gamma=2$ for \Cref{mechanism:nonmono-sm-pred} $q=0.63$, $z=2.395$, $\beta=0.171$, $\delta=0.13$ for \Cref{mechanism:nonmono-sm-withoutpred}.
\end{corollary}
\begin{corollary}
    \Cref{mechanism:nonmono-sm-conv}, given access to an erroneous prediction $\omega=(1-\varepsilon)\cdot v(S^*)$, with $\varepsilon\in (0,1)$, achieves an in-expectation robustness factor of $(\tau\cdot \frac{1}{19}\cdot (1-\varepsilon)+(1-\tau)\cdot \frac{1}{445})^{-1}$ by setting $p=0.33$, $a=0.335$, $\gamma=2$ for \Cref{mechanism:nonmono-sm-pred} and $q=0.63$, $z=2.395$, $\beta=0.171$, $\delta=0.13$ for \Cref{mechanism:nonmono-sm-withoutpred}. 
\end{corollary}
\noindent The proof of \Cref{thm:approx-ratio-nonmono-conv-mech}, consists of \Cref{lem:mechanism-nonmono-with-prediction} and \Cref{lem:mechanism-nonmono-without-prediction}, but first we will discuss truthfulness, budget feasibility and individual rationality.

\setlength{\algomargin}{9pt}
\makeatletter
\patchcmd{\@algocf@start}
  {-1.5em}
  {2pt}
{}{}
\makeatother


\begin{mechanism}[!h]
\caption{Online mechanism for a possibly non-monotone submodular function $v$, parameterized by tolerance $\tau \in [0,1)$.
}\label{mechanism:nonmono-sm-conv}
\DontPrintSemicolon
\setstretch{1.1}
\SetInd{4pt}{7pt} 
\begin{algolabel}{}[\textbf{With probability} $\tau$\,\textbf{:}]
Run \Cref{mechanism:nonmono-sm-pred} \label{line:Mechanism-pred2}

\end{algolabel}

\begin{algolabel}{}[\textbf{With probability $1-\tau$\,:}]

Run \Cref{mechanism:nonmono-sm-withoutpred} \label{line:Mechanism-withoutpred2}
\end{algolabel}

\end{mechanism}\smallskip

\setlength{\algomargin}{9pt}
\makeatletter
\patchcmd{\@algocf@start}
  {-1.5em}
  {2pt}
{}{}
\makeatother


\begin{mechanism}[!ht]
\caption{Online mechanism for a possibly non-monotone submodular function $v$, augmented with an (underestimating) prediction $\omega$ of $v(S^*)$, which it uses to build a feasible solution. 
}\label{mechanism:nonmono-sm-pred}
\DontPrintSemicolon
\setstretch{1.1}
\SetInd{4pt}{7pt} 
\begin{algolabel}{}[\textbf{With probability} $p$\,\textbf{:}]

Return the first agent $j$ for whom $v(j)\geq \frac{a\cdot \omega}{\gamma}$ and pay her $B$. \label{line:Dynkin2}
\end{algolabel}

\begin{algolabel}{}
[\textbf{With probability $1-p$:}]
Set $t=a\cdot\omega$\\
Set $S=S_1=S_2=\emptyset;$ $B_1=B_2=B$.

$S= S_j$, where $j$ is uniformly drawn from $\{1,2\}$.

\For{each round  as an agent $i$ arrives}{
Let $j(i)=\argmax_{j\in\{1,2\}}v(i|S_j)$

\If{$c_i \leq \bar{p}_i:=\frac{B}{t} v(i \,|\, S_{j(i)}) $ and $B_{j(i)}-\bar{p}_i \geq 0$}{

Add $i$ to $S_{j(i)}$, set $p_i=\bar{p}_i$ and $B_{j(i)}=B_{j(i)}-\bar{p}_i$.
    
}
}

Set $p_i = 0$ for any $i\notin S$.
\end{algolabel}
\Return set $S$ and payments $\vec{p}$.

\end{mechanism}\smallskip

\setlength{\algomargin}{9pt}
\makeatletter
\patchcmd{\@algocf@start}
  {-1.5em}
  {2pt}
{}{}
\makeatother


\LinesNotNumbered

\begin{mechanism}[!ht]
\LinesNotNumbered 
\caption{Online mechanism for a possibly non-monotone submodular function $v$.}\label{mechanism:nonmono-sm-withoutpred}
\DontPrintSemicolon
 \LinesNotNumbered 
\SetKwProg{Dots}{\ldots}{ \ldots}{\ldots}
\setstretch{1.1}
\SetInd{4pt}{7pt}
\SetNlSty{textbf}{}{} 
\Dots{(Lines 1--8 same as in \Cref{mechanism:mono-sm-withoutpred}, but now $T_1$ is a $1/e$-approximate solution on $N_1$, using the algorithm of \citet{kulik2013}, and $q$,  $\beta$ have different values.)}

\nlset{9} Set $S=S_1=S_2=\emptyset;$ $B_1=B_2=B$.

\nlset{10}$S= S_j$, where $j$ is uniformly drawn from $\{1,2\}$.

\nlset{11}\For{each round as $i \in N_2$ arrives}{
\nlset{12}Let $j(i)=\argmax_{j\in\{1,2\}}v(i|S_j)$

\nlset{13}\If{$c_i \leq \bar{p}_i:=\frac{B}{t} v(i \,|\, S_{j(i)}) $ and $B_{j(i)}-\bar{p}_i \geq 0$} {

  \nlset{14} Add $i$ to $S_{j(i)}$, set $p_i=\bar{p}_i$ and $B_{j(i)}=B_{j(i)}-\bar{p}_i$.
    
}
}

\nlset{15}Set $p_i = 0$ for any $i\notin S$.

\nlset{16}\Return set $S$ and payments $\vec{p}$.
\end{mechanism}
\LinesNumbered
\begin{restatable}{lemma}{lemtruthbfnonmono}\label{lem:truth-bf-nonmono}
\Cref{mechanism:nonmono-sm-conv} is truthful, budget-feasible and individually rational.    
\end{restatable}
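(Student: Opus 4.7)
The plan is to mirror the structure of the proof of \Cref{lem:truth-bf 1} while accounting for the two additional features present in the non-monotone case: the maintenance of two disjoint partial solutions $S_1, S_2$ and the a~priori random choice of which set is actually returned. Since \Cref{mechanism:nonmono-sm-conv} is a convex combination of \Cref{mechanism:nonmono-sm-pred} and \Cref{mechanism:nonmono-sm-withoutpred}, it suffices to establish each of the three properties separately for those two mechanisms. Within each of them, the ``with probability $p$ (resp.\ $q$)'' branch is the Dynkin-style rule, which does not look at any reported cost, returns a single agent and pays her exactly $B$; universal truthfulness, IR (since $c_i \leq B$ for all $i$), and budget-feasibility are immediate for this branch.

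The substantive case is the ``with probability $1-p$ (resp.\ $1-q$)'' branch. I would first observe that the index $j(i) = \argmax_{j\in\{1,2\}} v(i\,|\,S_j)$ is a purely value-based quantity, determined by $v$ and the current partial sets $S_1, S_2$, and therefore is independent of agent $i$'s reported cost $b_i$. Moreover, the offered threshold $\bar{p}_i = \tfrac{B}{t} v(i\,|\, S_{j(i)})$ depends only on $v$, $i$, the configuration of $S_{j(i)}$ at the moment $i$ arrives, and (for \Cref{mechanism:nonmono-sm-withoutpred}) on $T_1$; none of these depend on $b_i$. Fixing the realized arrival order and conditioning on the coin flip that selects the returned index $j^{\star}\in\{1,2\}$, I would then replicate the take-it-or-leave-it argument from \Cref{lem:truth-bf 1}: if $i$ is placed in $S_{j(i)}$ under truthful reporting, raising $b_i$ above $\bar{p}_i$ only causes rejection, while lowering it does not change $\bar{p}_i$; if $i$ is rejected under truthful reporting (because $c_i > \bar{p}_i$), then any declaration that yields acceptance produces a payment $\bar{p}_i < c_i$, i.e.\ negative utility whenever $j(i)=j^{\star}$ and zero utility otherwise. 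Averaging over $j^{\star}$ (and over the internal randomness of the mechanisms) preserves the inequality $u_i(c_i, \vec{b}_{-i}) \geq u_i(\vec{b})$, which gives universal truthfulness.

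For budget-feasibility in this branch, I would use the fact that the mechanism actually pays only those agents in $S=S_{j^{\star}}$, i.e., $p_i=0$ whenever $i\notin S$. The per-set check $B_{j(i)} - \bar{p}_i \geq 0$ enforces $\sum_{i\in S_j} p_i \leq B$ for \emph{each} $j\in\{1,2\}$ separately, so in particular $\sum_{i\in S} p_i \leq B$. Individual rationality follows from the acceptance rule $c_i \leq \bar{p}_i = p_i$ for every agent that is placed into a set (and zero payment is trivially non-negative for the rest).

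The only subtlety, and the step that requires the most care, is ensuring that the analysis does not implicitly assume agent $i$ can influence $j(i')$ or $\bar{p}_{i'}$ for some later agent $i'$ in a way that benefits $i$. This is ruled out by the observation that $i$'s only decision is to accept or reject her own offer; once she is placed into $S_{j(i)}$, the future updates $S_{j(i')}$ and thresholds $\bar{p}_{i'}$ depend on \emph{whether} she was placed, not on her reported cost, and her own utility depends only on whether $i\in S$ at the end, which in turn depends on the independent coin flip for $j^{\star}$. With this observation in hand, the three required properties follow along the lines of the monotone argument.
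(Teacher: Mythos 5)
Your proof is correct and follows essentially the same route as the paper's: decompose the convex combination into its two sub-mechanisms, dispose of the Dynkin-style branch immediately, and run the standard posted-price take-it-or-leave-it argument for the other branch, with budget feasibility and individual rationality read off the two inequalities in the if-condition. If anything, you are more careful than the paper, whose appendix proof largely replays the monotone-case argument verbatim; your explicit observations that $j(i)=\argmax_{j\in\{1,2\}}v(i\,|\,S_j)$ is cost-independent, that one should condition on the pre-drawn index $j^{\star}$, and that the per-set budget check bounds $\sum_{i\in S_{j^{\star}}}p_i$ are exactly the right way to handle the two-solution bookkeeping.
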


\begin{lemma}\label{lem:mechanism-nonmono-with-prediction}
    \Cref{mechanism:nonmono-sm-pred}, with access to a prediction $\omega=(1-\varepsilon)\cdot v(S^*)$ with $\varepsilon\in [0,1)$,  has an approximation ratio of $\min\!\left\{
    \frac{a p(1-\varepsilon) }{\gamma},\frac{(1-p)(1-(1-\varepsilon)2a)}{4}, \frac{(1-p)a(1-\varepsilon)(\gamma-1)}{2\gamma}\right\}$ in expectation.
\end{lemma}
\begin{proof}[Proof Sketch]
    The proof is quite similar to the one of \Cref{lem:mechanism-with-prediction}, it discriminates between the same three cases and proceeds in a similar way. In order to avoid repeating the same arguments, we will restrict this proof to the second case, which requires different calculations due to the non-monotonicity of the value functions. Also, we will use some extra notation, for an agent $i$ we denote as $S_{j(i)}$, $j(i)\in \{1,2\}$ the set in which agent $i$ had the greatest marginal contribution when arrived and with $S_j^{t_i}$ the set $S_j$ at the moment $t_i$ that agent $i$ arrived.
\begin{itemize}
        \item \underline{Each agent $i\in S^*\setminus (S_1\cup S_2)$ rejects the posted price and $v(i^*)\leq \frac{a\cdot (1-\varepsilon)}{\gamma}\cdot v(S^*)$.}\\[4pt] 
From the definition of submodular functions we have that:
\begin{align*}
    v(S_1\cup S^*)+v(S_2\cup S^*)\geq v(S^*)+v(S_1\cup S_2\cup S^*)\geq v(S^*),
\end{align*}where the last inequality holds because the value function $v$ is non-negative. From the definition of submodularity again we get:
\begin{align*}
    v(S_1\cup S^*) &\leq v(S_1)+\sum\limits_{i\in S^*\setminus S_1}v(i|S_1)\\&=v(S_1)+\sum\limits_{i\in S^*\setminus(S_1\cup S_2)} v(i|S_1)+\sum\limits_{i\in (S^*\cap S_2)\setminus S_1} v(i|S_1)\\ &\leq v(S_1)+\sum\limits_{i\in S^*\setminus(S_1\cup S_2)} v(i|S_1^{t_i})+\sum\limits_{i\in (S^*\cap S_2)\setminus S_1} v(i|S_1^{t_i})\\&\leq v(S_1)+\sum\limits_{i\in S^*\setminus(S_1\cup S_2)} v(i|S_1^{t_i})+\sum\limits_{i\in (S^*\cap S_2)\setminus S_1} v(i|S_{2}^{t_i}) \\ &\leq v(S_1)+\sum\limits_{i\in S^*\setminus(S_1\cup S_2)} v(i|S_1^{t_i})+\sum\limits_{i\in S_2} v(i|S_{2}^{t_i}) 
    \\&= v(S_1)+v(S_2)+\sum\limits_{i\in S^*\setminus(S_1\cup S_2)} v(i|S_1^{t_i}),
\end{align*}
where the first inequality holds because $S_1^{t_i}\subseteq S_1$ and the second one due to the fact that each agent  $i\in S^*\cap S_2$ was added to set $S_2$, hence $v(i|S_1^{t_i})\leq v(i|S_2^{t_i})$. Then, the fourth inequality holds because
$(S^*\cap S_2)\setminus S_1\subseteq S_2$ and every agent $i\in S_2$ has a strictly
positive marginal $v(i|S_2^{t_i})>0$: indeed, $i$ was accepted into $S_2$, so it
passed the threshold $c_i\leq \frac{B}{t}\,v(i|S_2^{t_i})$, which together with
$c_i>0$ gives $v(i|S_2^{t_i})>0$. Symmetrically, we get that: 
\begin{align*}
    v(S_2\cup S^*)\leq v(S_2)+v(S_1)+\sum\limits_{i\in S^*\setminus(S_1\cup S_2)} v(i|S_2^{t_i}).
\end{align*}
Adding up, these two inequalities and dividing by 4, we have that: 
\begin{align*}
\frac{v(S_1)}{2}+\frac{v(S_2)}{2}\geq \frac{v(S^*)}{4}-\frac{1}{2} \sum\limits_{i\in S^*\setminus(S_1\cup S_2)} v(i|S_{j(i)}^{t_i}),
\end{align*} which combined with the fact that each agent $i\in S^*\setminus(S_1\cup S_2)$ rejected the posted price offered by the mechanism and also that $\mathbb{E}[v(S)]=\frac{v(S_1)}{2}+\frac{v(S_2)}{2}$, leads to: 
\begin{align*}
\EX[v(S)]&\geq \frac{v(S^*)}{4}-\frac{\sum\limits_{i\in S^*\setminus(S_1\cup S_2)} c_i \cdot a\cdot(1-\varepsilon)\cdot v(S^*)}{2\cdot B}
\\ &\geq  \frac{v(S^*)}{4}-\frac{ a\cdot(1-\varepsilon)\cdot v(S^*)}{2}=\frac{1-2a\cdot(1-\varepsilon)}{4}\cdot v(S^*),
\end{align*}
where the last inequality holds due to budget feasibility of set $S^*$. \qedhere
\end{itemize}
\end{proof}

    \begin{restatable}{lemma}{lemmechanismnonmonowithoutprediction}\label{lem:mechanism-nonmono-without-prediction}
    \Cref{mechanism:nonmono-sm-withoutpred}, has an approximation ratio of:
    \begin{align*} 
\min\left\{ 
(1 - q)\tilde{p} \cdot \min\left( 
\frac{1 - 2\delta - 4\beta}{8},\ 
\frac{\beta (z - 1)(1 - 2\delta)}{4 e z}
\right),\ 
\frac{q \beta (1 - 2\delta)}{2 e^2 z}
\right\}
     \end{align*}
     in expectation, where $\tilde{p}=1-2 \exp{\left(-\frac{12 z e \, \delta^2}{\beta (1 - 2\delta)(3 + 4\delta)}
\right)}.$   
\end{restatable}

\subsection{The Non-Monotone Analogue of  \Cref{mechanism:mono-sm2} }
Here we will present a mechanism that follows the same approach as \Cref{mechanism:mono-sm2}, adapted for the non-monotone case, by building two solutions instead of a single one. Again, we cannot simply return the best of the two solutions built, since this would be incompatible with the Random Order Model, so the mechanism randomly selects beforehand which of the two solutions to return, each chosen with probability ${1}/{2}$. 
\begin{mechanism}[!ht]
\caption{Online mechanism for a general submodular function $v$, parameterized by tolerance $\tau \in [0,1)$.}\label{mechanism:non-mono-sm2}
\DontPrintSemicolon
\setstretch{1.1}
\SetInd{4pt}{7pt} 
\begin{algolabel}{}[\textbf{With probability} $q$\,\textbf{:}  \SetNoFillComment {\small \tcc*[f]{Dynkin's Algorithm}}]

Sample the first $\floor{n/e}$ agents; let $i^*$ be the most valuable among them.

From the remaining agents, return the first agent $j$ for whom $v(j)\geq v(i^*)$ and pay her $B$.
\end{algolabel}

\begin{algolabel}{}[\textbf{With probability $1-q$\,:}]

Draw $\xi_1$ from the binomial distribution $\mathcal{B}(n,\frac{1}{k})$.

Let $N_1$ be the set of the first $\xi_1$ agents that arrive.  \

Let $T_1$ be a  $(\frac{1}{e})$-approximate solution to \eqref{eq:opt} on $N_1$, using the algorithm of \citet{kulik2013}.

Set $t= a\cdot \omega +\beta\cdot v(T_1)$

Set $S=S_1=S_2=\emptyset$, $B_1=B_2=B$.

$S= S_j$, where $j$ is uniformly drawn from $\{1,2\}$.

\For{each round as $i \in N_2$ arrives}{
Let $j(i)=\argmax_{j\in\{1,2\}}v(i|S_j)$

\If{$c_i \leq \bar{p}_i:=\frac{B}{t} v(i \,|\, S_{j(i)}) $ and $B_{j(i)}-\bar{p}_i \geq 0$}{

   Add $i$ to $S_{j(i)}$, set $p_i=\bar{p}_i$ and $B_{j(i)}=B_{j(i)}-\bar{p}_i$.
    
}
}

Set $p_i = 0$ for any $i\notin S$.
\end{algolabel}
\nlset{15}\Return set $S$ and payments $\vec{p}$.
\end{mechanism}\smallskip
\begin{lemma}\label{lem:truth-bf-nonmono-2}
\Cref{mechanism:non-mono-sm2} is truthful, budget-feasible and individually rational.    
\end{lemma}
The proof of this lemma is similar to the proof of \Cref{lem:truth-bf-nonmono}.
\begin{restatable}{theorem}{thmapproxnonmonosmtwo}\label{thm:approx-non-mono-sm2}
    \Cref{mechanism:non-mono-sm2}, given access to a prediction $\omega=(1-\varepsilon)\cdot v(S^*)$ with $\varepsilon\in[0,1)$, has an approximation guarantee of 
    \begin{align*}
        \min\Bigg\{&
\frac{q \left(e k\,a(1 - \varepsilon) + \beta (1 - k \delta)\right)}{e^2 z k},\\& 
(1 - q)\tilde{p} \cdot \min\Bigg(
\frac{(k - 1 - k\delta) - 2k\left(a(1 - \varepsilon) + \beta\right)}{4k},\ 
\frac{(z - 1)\left(e k\,a(1 - \varepsilon) + \beta (1 - k\delta)\right)}{2 e z k}
\Bigg)
\Bigg\},
    \end{align*}
    where $\tilde{p}=1-2 \displaystyle\exp{\left(-\frac{3 e k^3 z \, \delta^2}{2\left(3(k - 1) + \delta k^2\right)\left(e k a(1 - \varepsilon) + \beta(1 - k\delta)\right)}
\right)}$.
\end{restatable}
From the above theorem, we obtain the following corollary, which gives us the in-expectation consistency and robustness tradeoff of \Cref{mechanism:non-mono-sm2}.
\begin{corollary}\label{corolarry:cons-rob-non-mono-sm2}
    \Cref{mechanism:non-mono-sm2}, given access to a possibly erroneous prediction $\omega=(1-\varepsilon)\cdot v(S^*)$, with $\varepsilon\in [0,1)$, achieves an in-expectation consistency-robustness tradeoff of $228$ and $818$, by setting $q=0.61$, $z=2.47$, $\beta=0.155$, $a=0.035$, $\delta=0.164$ and $k=2.5$.
\end{corollary}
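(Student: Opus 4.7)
The plan is to apply \Cref{thm:approx-non-mono-sm2} with the specified parameters $q=0.61$, $z=2.47$, $\beta=0.155$, $a=0.035$, $\delta=0.164$, $k=2.5$, and verify the resulting bounds through direct numerical evaluation. First I would substitute these values into the expression for $\tilde{p}$ from the theorem, treating it as a function of $\varepsilon$, and into each of the three terms inside the outer $\min$.

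For the consistency guarantee I set $\omega = v(S^*)$, i.e., $\varepsilon = 0$, and evaluate the three terms: the Dynkin-branch term $\frac{q}{e z}(a + \frac{\beta}{e}(\frac{1}{k}-\delta))$, the ``rejection'' term $(1-q)\tilde{p}\bigl(\frac{1}{4}(\frac{k-1}{k}-\delta) - \frac{1}{2}(a+\beta)\bigr)$, and the ``budget-depletion'' term $(1-q)\tilde{p}\cdot\frac{z-1}{2z}(a + \frac{\beta}{e}(\frac{1}{k}-\delta))$. The claim is that each is at least $1/228$, which is verified by direct substitution; the parameter choices are meant to approximately balance these three terms.

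For the robustness guarantee I must handle any $\varepsilon \in [0,1)$. The Dynkin and budget-depletion terms multiply $a$ by $(1-\varepsilon)$ and hence are monotonically decreasing in $\varepsilon$, attaining their infima as $\varepsilon \to 1$ (where the $a(1-\varepsilon)$ contribution vanishes). The rejection term has a more subtle dependence: its ``main'' factor $\frac{1}{4}(\frac{k-1}{k}-\delta) - \frac{1}{2}(a(1-\varepsilon)+\beta)$ is increasing in $\varepsilon$, but it is multiplied by $\tilde{p}$, which is itself decreasing in $\varepsilon$. I would determine its worst case by inspecting the product at the endpoints of $[0,1)$ and at any interior critical point, then verify that each resulting infimum is at least $1/818$.

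The main obstacle is the multi-parameter numerical optimization: the six parameters $(q, z, \beta, a, \delta, k)$ must be chosen simultaneously so that the three bottleneck terms are balanced at both $\varepsilon = 0$ (consistency) and the worst-case $\varepsilon$ (robustness), with neither the $228$ nor the $818$ bound slack. In practice this is carried out using a numerical solver, and the stated parameter values represent a (locally) Pareto-optimal configuration of the consistency--robustness tradeoff. No conceptually new argument beyond \Cref{thm:approx-non-mono-sm2} is needed; the content of the corollary is the existence of a feasible parameter setting achieving the stated pair $(228, 818)$.
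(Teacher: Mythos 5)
Your proposal is correct and matches the paper's (implicit) derivation: the corollary is obtained purely by substituting the stated parameters into the bound of \Cref{thm:approx-non-mono-sm2}, checking the three terms at $\varepsilon=0$ for consistency (each evaluates to roughly $1/228$ or better, with the rejection and Dynkin terms essentially tight) and taking the worst case over $\varepsilon\in[0,1)$ for robustness, where the binding term is the Dynkin branch as $\varepsilon\to 1$, giving approximately $1/818$. Your additional care about a possible interior critical point for the rejection term is sound but turns out not to matter, since the overall minimum is governed by the monotonically decreasing Dynkin term.
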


\section{An Impossibility Result}\label{sec:lower-bound}
Here we turn our attention to the \textit{offline} setting and present a lower bound of $2$ on the consistency guarantee for all universally truthful, budget-feasible, and individually rational randomized mechanisms \textit{with predictions}. Clearly, this impossibility result extends to the online setting as well.  Moreover, our result holds even when the instances are augmented with a prediction of the optimal set of agents $\hat{S} \subseteq N$, rather than merely the value of the optimal solution $\omega$ as assumed in \Cref{sec:mono-submodular,sec:nonmono-submodular}. Note that having access to $\hat{S}$ in the offline setting is at least as informative as having access to $\omega$, since the latter can be inferred from the former, i.e., given a prediction $\hat{S}$, we may set $\omega = v(\hat{S})$. We denote by $\mathcal{M}$ the class of universally truthful, budget-feasible, and individually rational randomized mechanisms, and by $\mathcal{M}^d \subseteq \mathcal{M}$ the subclass of deterministic mechanisms that are truthful, budget-feasible, and individually rational.
The main result of this section is \Cref{thm:randomized-lb-2}.

\begin{theorem}\label{thm:randomized-lb-2}
    For any $\varepsilon > 0$, no universally truthful, individually rational,  budget-feasible randomized mechanism that has access to a prediction of the optimal solution can be $(2-\varepsilon)$-consistent, even for  additive valuation functions.
\end{theorem}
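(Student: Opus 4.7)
The plan is to show that access to a perfect prediction of the optimal set $\hat{S}$ provides no additional power to a truthful budget-feasible mechanism beyond what is already captured in the prediction-free regime, and thereby reduce the statement to the classical $2$-approximation lower bound of \citet{chen2011approximability} for randomized, truthful, IR, budget-feasible mechanisms on additive valuations.

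A universally truthful randomized mechanism $M \in \mathcal{M}$ is by definition a probability distribution over deterministic mechanisms in $\mathcal{M}^d$, and the consistency guarantee is linear in this distribution. So I first reduce the claim to exhibiting, for every $\varepsilon > 0$, a distribution $\mathcal{D}$ over additive instances---all sharing a common optimal set $\hat{S}$---such that for every $M' \in \mathcal{M}^d$, $\mathbb{E}_{I \sim \mathcal{D}}[v(M'(I))] \le \mathbb{E}_{I \sim \mathcal{D}}[v(S^*(I))] / (2-\varepsilon)$. Averaging this bound over the randomization of $M$ shows that $M$ is no better than $(2-\varepsilon)$-approximate in expectation on $\mathcal{D}$, and picking a worst-case instance in $\mathcal{D}$ yields the desired single witness. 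Crucially, since every instance in $\mathcal{D}$ has optimal set $\hat{S}$, the prediction $\omega = \hat{S}$ is perfect on each of them, so the mechanism's behavior reduces to a prediction-free mechanism $M_{\hat{S}} \in \mathcal{M}$ indexed by that prediction.

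To build $\mathcal{D}$, I will adapt the \citet{chen2011approximability} lower-bound instances: fix a set $\hat{S}$ of unit-value agents and vary only the cost vectors of these agents within the feasible region $\sum_{i \in \hat{S}} c_i \le B$, adding, if necessary, dummy value-$0$ agents so that $\hat{S}$ is always uniquely optimal and therefore the prediction fed to $M$ is identical across the family. The threshold-pricing characterization of truthful deterministic mechanisms then forces a fixed payment rule on this family, and the budget constraint $\sum_{i \in A} t_i(\mathbf{b}_{-i}) \le B$ together with the individual rationality constraint $t_i \ge b_i$ creates precisely the conflict exploited in \citet{chen2011approximability}: having the mechanism include all of $\hat{S}$ on one cost profile forces thresholds that are infeasible on a different cost profile in the same family, so any deterministic rule must miss at least half of $v(\hat{S})$ on some instance.

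The main obstacle I anticipate is ensuring that the cost-varying family retains the full adversarial strength of the \citet{chen2011approximability} construction once restricted to instances with a fixed optimal set $\hat{S}$: the original lower bound uses a distribution that may allow optima of different compositions, and one has to argue that freezing the optimum---possibly via the value-$0$ padding trick above, or by choosing costs inside $\hat{S}$ whose total never exceeds $B$ so that $\hat{S}$ is trivially optimal---does not dilute the fraction of instances on which any fixed threshold rule misfires. Once this verification is in place, the theorem follows by contradiction: any $(2-\varepsilon)$-consistent $M$ would induce, via $M_{\hat{S}}$, a $(2-\varepsilon)$-approximate prediction-free mechanism on the restricted family, contradicting the \citet{chen2011approximability} lower bound.
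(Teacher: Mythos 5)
Your overall strategy coincides with the paper's: restrict attention to a family of additive instances that all share the same (perfect) prediction, average over that family to pass from universally truthful randomized mechanisms to deterministic ones (the paper packages this as \Cref{lemma:yao}, a Yao-style argument), and then use Myerson's monotonicity and payment-identity conditions to show that no deterministic truthful, individually rational, budget-feasible mechanism can select the full optimal set on two distinct cost profiles of the family. So the skeleton is right, and your description of the threshold conflict is exactly the mechanism behind \Cref{lemma:profile-blocking}.

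The gap is that the step you explicitly defer --- building a family with a common optimum that ``retains the full adversarial strength'' --- is the actual content of the proof, and your proposed relaxation (costs whose total ``never exceeds $B$'') is not quite enough. The conflict only materializes when the profiles lie \emph{on} the budget hyperplane: for two unit-value agents with profiles $(c, B-c)$ and $(c', B-c')$, $c' < c$, selecting both agents at both profiles forces, at the hybrid profile $(c', B-c)$, threshold payments of at least $c$ for agent $1$ and at least $B-c'$ for agent $2$, whose sum $c + B - c' > B$ violates budget feasibility; if the totals were strictly below $B$ there would be slack and this argument could fail. The paper therefore takes the $k$ profiles $(iB/k,\,(k-i)B/k)$ for $i \in [k]$, all summing to exactly $B$ (so $\{1,2\}$ is optimal everywhere and the prediction is perfect), derives from the pairwise conflict that at most one profile in the support can receive value $2$, and obtains an expected ratio of at most $\tfrac12(1+1/k)$, which with $k = \lceil 2/\varepsilon \rceil$ contradicts $(2-\varepsilon)$-consistency. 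Relatedly, you cannot invoke the \citet{chen2011approximability} randomized lower bound as a black box: that statement quantifies over arbitrary instances with no requirement that the hard distribution share a single optimal set, so the bound must be re-proved on the restricted family --- which is precisely what \Cref{lemma:profile-blocking,lemma:ratio-P-ub} do.
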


To prove \Cref{thm:randomized-lb-2}, we focus on a family of augmented instances with just two agents that differ only in their cost profiles. Specifically, for each such instance, we set $v(\{1\}) = v(\{2\}) = 1$, denote the budget as $B$, and assume access to a prediction $\hat{S}=\{1,2\}$. We refer to these instances as \emph{canonical}. Note that the prediction is perfect only for cost profiles where the optimal solution coincides with it, i.e., when, given a profile $\vec{c} = (c_1, c_2)$, it holds that $c_1 + c_2 \leq B$.

One of the technical tools we use to prove \Cref{thm:randomized-lb-2} is Yao's Lemma (due to \citet{yao1977}). Intuitively, this idea enables us to show a lower bound on the performance of a universally truthful, individually rational and budget-feasible randomized mechanism $M \in \mathcal{M}$ by focusing on the performance of an (arbitrarily given) deterministic truthful, individually rational and budget-feasible mechanism $(A, \vec{p}) \in \mathcal{M}^d$ instead. This is achieved by constructing an ``adversarial'' distribution of instances of canonical instances and reasoning about the performance of $(A, \vec{p})$ for the distribution. \Cref{lemma:yao} is a weaker version of Yao's Lemma for our procurement auction setting which is sufficient for our purposes. For completeness, we provide its proof in \ref{app:sec6}.

\begin{restatable}{lemma}{lemmayao}\label{lemma:yao}
    Let $C$ be a set of cost profiles for canonical instances. For any mechanism $M=(A^r, \vec{p}^r) \in \mathcal{M}$ and any discrete probability distribution P of profiles in $C$, we have
\begin{equation*}\label{eq:yao-minmax}
        \min_{\vec{c} \in C}\EX\left[\frac{v\left(A^r(\vec{c}) \right)}{v\left(S^*(\vec{c}) \right)} \right] \leq  \max_{{(A,\vec{p}) \in
        \mathcal{M}^d}} \EX_{\vec{c} \sim P}\left[\frac{v\left(A(\vec{c}) \right)}{v\left(S^*(\vec{c}) \right)} \right].
    \end{equation*}
\end{restatable}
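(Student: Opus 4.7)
The plan is to follow the classical argument behind Yao's principle, adapted to our procurement setting. The central observation is definitional: by the preliminaries, every mechanism $M \in \mathcal{M}$ is required to be a probability distribution over deterministic mechanisms in $\mathcal{M}^d$ (i.e., universally truthful). So the first step is to unpack $M = (A^r, p^r)$ as a distribution $\mathcal{D}$ over elements $(A, p) \in \mathcal{M}^d$, so that for any fixed cost profile $\vec{c}$, the expectation defining the performance of $M$ at $\vec{c}$ is precisely $\EX_{(A,p) \sim \mathcal{D}}\bigl[v(A(\vec{c}))/v(S^*(\vec{c}))\bigr]$.

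Once this reformulation is in place, I would carry out a short four-line chain. Letting $X(A,\vec{c}) := v(A(\vec{c}))/v(S^*(\vec{c}))$, the \emph{minimum-is-at-most-average} step gives
\[
\min_{\vec{c} \in C} \EX_{(A,p) \sim \mathcal{D}}[X(A,\vec{c})]
\;\leq\; \EX_{\vec{c} \sim P} \EX_{(A,p) \sim \mathcal{D}}[X(A,\vec{c})],
\]
since the minimum over $C$ is bounded by the expectation under any distribution $P$ supported on $C$. Next, I would swap the order of the two (independent) expectations by Fubini/linearity, obtaining $\EX_{(A,p) \sim \mathcal{D}} \EX_{\vec{c} \sim P}[X(A,\vec{c})]$. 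Finally, since the expectation of a (bounded, nonnegative) random variable is at most its essential supremum, and every realization $(A,p)$ of $\mathcal{D}$ lies in $\mathcal{M}^d$, this last quantity is upper bounded by $\max_{(A,p) \in \mathcal{M}^d} \EX_{\vec{c} \sim P}[X(A,\vec{c})]$. Concatenating the three inequalities yields the lemma.

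The only mildly delicate point is the first step, namely the identification of $M \in \mathcal{M}$ with a distribution over $\mathcal{M}^d$. This is not a separate theorem to prove here but rather a direct consequence of the definition of universal truthfulness given in \Cref{sec:prelims} (together with the requirement that each mechanism in the support be budget-feasible and individually rational). I would state this explicitly at the outset and then apply Fubini without fuss, noting that the ratios are bounded in $[0,1]$ so no integrability issues arise. The remaining steps are standard and require no additional structure specific to canonical instances; in particular, the argument does not use the two-agent form of $C$ at all, which is reassuring since the lemma is stated for an arbitrary set $C$ of canonical cost profiles and an arbitrary discrete distribution $P$.
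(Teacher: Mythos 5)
Your proposal is correct and follows essentially the same route as the paper's own proof: it uses universal truthfulness to identify $M$ with a distribution over $\mathcal{M}^d$, bounds the minimum over $C$ by the $P$-average, exchanges the two expectations, and then bounds the inner average by the maximum over the support of $M$ and hence over $\mathcal{M}^d$. The only cosmetic difference is that the paper writes the exchange as an explicit double sum over $\supp(P)$ and $\supp((A^r,p^r))$ rather than invoking Fubini, which changes nothing of substance.
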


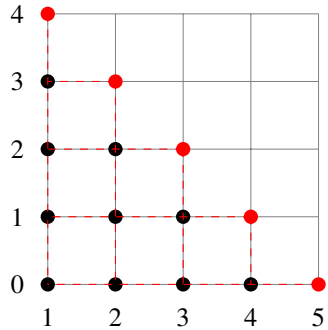
\begin{figure}
    \centering
    \begin{tikzpicture}[scale=0.9]

    \draw[step=1cm, gray, very thin] (0,0) grid (4,4);

    \foreach \x [evaluate=\x as \y using int(\x-1)] in {1,2,3,4,5} {
        \node[below] at ({\x-1},-0.2) {\small\x};
        \node[left] at (-0.2,{\x-1}) {\small\y};
    }

    \fill[black] (0,0) circle (3pt); 
    \fill[black] (0,1) circle (3pt); 
    \fill[black] (0,2) circle (3pt); 
    \fill[black] (0,3) circle (3pt); 
    \fill[black] (1,0) circle (3pt); 
    \fill[black] (1,1) circle (3pt); 
    \fill[black] (1,2) circle (3pt); 
    \fill[black] (2,0) circle (3pt); 
    \fill[black] (2,1) circle (3pt); 
    \fill[black] (3,0) circle (3pt); 

    \fill[red] (4,0) circle (3pt); 
    \fill[red] (3,1) circle (3pt); 
    \fill[red] (2,2) circle (3pt); 
    \fill[red] (1,3) circle (3pt); 
    \fill[red] (0,4) circle (3pt); 

    \draw[red, dashed] (4,0) -- (4,0); 
    \draw[red, dashed] (3,1) -- (3,0); 
    \draw[red, dashed] (2,2) -- (2,0); 
    \draw[red, dashed] (1,3) -- (1,0); 
    \draw[red, dashed] (0,4) -- (0,0); 

    \draw[red, dashed] (3,1) -- (0,1); 
    \draw[red, dashed] (2,2) -- (0,2); 
    \draw[red, dashed] (1,3) -- (0,3); 
\draw[red, dashed] (4,0) -- (0,0);
    \end{tikzpicture}
    \caption{Illustration of the support of $P(5)$ (red dots) for $B = 5$.\label{fig:lw-illustration}}
\end{figure}

We now describe the probability distribution of cost profiles for canonical instances we use. Let $k \geq 2$ be an integer. We denote the distribution by $P(k)$ and draw $\vec{c} \sim P(k)$ as follows:
for $i \in \{1, \dots, k\}$, profile $\vec{c} = \left(iB/k, (k - i)B/k \right)$ is drawn with probability $1/k$; see \Cref{fig:lw-illustration} for an illustration.
Observe that for each such profile the optimal solution to the algorithmic problem is $S^*(\vec{c})= \{1,2\}$ and, thus, the prediction $\hat{S} = \{1,2\}$ is perfect.

Our setting in \Cref{sec:lower-bound} is an offline, single-parameter domain. Therefore, the characterization of \citet{myerson81} applies. In \Cref{lemma:myerson}, we restate a version of the characterization for deterministic mechanisms by \citet{archer01}, which will be sufficient for our purposes.

\begin{lemma}[due to \protect\citet{myerson81}, \protect\citet{archer01}]
\label{lemma:myerson}
A single-parameter deterministic mechanism $M^d = (A, \vec{p})$ is truthful and individually rational if and only if, for every instance $I = (N, \vec{c}, v, B)$ and every agent $i \in N$, the following conditions hold:
\begin{enumerate}
    \item If $i \in A(\vec{c})$, then for all $c'_i \leq c_i$, it holds that $ i \in A(c'_i, \vec{c}_{-i})$ (monotonicity).
    \item If $i \in A(\vec{c})$, then $p_i(\vec{c}) = \sup \left\{ z \geq c_i \mid i \in  A(z, \vec{c}_{-i}) \right\}$. Otherwise, $p_i(\vec{c}) = 0$ (payment identity).
\end{enumerate}
\end{lemma}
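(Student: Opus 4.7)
The plan is to prove the two directions of the equivalence separately, following the classical Myerson--Archer template adapted to our procurement setting (where agents report costs and prefer higher payments to lower ones, and prefer losing to winning at a payment below their true cost). The central object throughout will be the threshold price $\bar{z}_i(\vec{c}_{-i}) := \sup\{z \geq 0 : i \in A(z, \vec{c}_{-i})\}$, which captures the highest reported cost at which agent $i$ can still be selected given~$\vec{c}_{-i}$.

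For the sufficiency direction ($\Leftarrow$), assume monotonicity and the payment identity. Individual rationality is immediate: whenever $i$ wins, the supremum defining $p_i(\vec{c})$ is taken over a set containing $c_i$, so $p_i(\vec{c}) \geq c_i$. For truthfulness, fix $i$ and the others' reports $\vec{c}_{-i}$. By monotonicity, the set $\{z \geq 0 : i \in A(z, \vec{c}_{-i})\}$ is downward-closed and therefore coincides with $[0, \bar{z}_i)$ or $[0, \bar{z}_i]$. Whenever $i$ wins by reporting some $b_i$ in this interval, the payment identity assigns the same payment $\bar{z}_i$, independent of $b_i$. Hence the utility as a function of the report is either $\bar{z}_i - c_i$ (on winning) or $0$ (on losing). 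If $c_i \leq \bar{z}_i$, truth-telling wins and achieves $\bar{z}_i - c_i \geq 0$, which is the best possible. If $c_i > \bar{z}_i$, truth-telling loses and yields $0$, while any winning misreport yields $\bar{z}_i - c_i < 0$; so truth-telling is again optimal.

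For the necessity direction ($\Rightarrow$), assume truthfulness and IR. To prove monotonicity, suppose $i \in A(\vec{c})$ but $i \notin A(c'_i, \vec{c}_{-i})$ for some $c'_i < c_i$. Then the agent with true cost $c'_i$ can misreport $c_i$: by IR the resulting payment is at least $c_i > c'_i$, yielding strictly positive utility, whereas truth-telling yields $0$, contradicting truthfulness. For the payment identity, first show that the payment conditional on winning depends only on $\vec{c}_{-i}$: if two winning reports $c_i, c'_i$ for agent $i$ led to distinct payments, the agent with true cost matching the lower-paying report could improve by declaring the other, violating truthfulness. Call this common value $p^\ast(\vec{c}_{-i})$. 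By IR applied to each winning report, $p^\ast(\vec{c}_{-i}) \geq \bar{z}_i(\vec{c}_{-i})$. Conversely, if $p^\ast(\vec{c}_{-i}) > \bar{z}_i(\vec{c}_{-i})$, then an agent with true cost $c_i \in (\bar{z}_i(\vec{c}_{-i}), p^\ast(\vec{c}_{-i}))$ loses by truth-telling (utility $0$) but by misreporting any $b_i < \bar{z}_i(\vec{c}_{-i})$ wins and earns $p^\ast(\vec{c}_{-i}) - c_i > 0$, again a contradiction. Hence $p^\ast(\vec{c}_{-i}) = \bar{z}_i(\vec{c}_{-i}) = \sup\{z \geq c_i : i \in A(z, \vec{c}_{-i})\}$ for any winning $c_i$ (the two suprema agree because monotonicity ensures $i$ already wins at every $z \leq c_i$), which is the claimed identity.

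The main obstacle I anticipate is handling the boundary of the winning interval carefully: monotonicity alone does not pin down whether $\bar{z}_i(\vec{c}_{-i})$ is attained, so the ``agent with true cost strictly between $\bar{z}_i$ and $p^\ast$'' construction and the choice of a winning misreport strictly below $\bar{z}_i$ require $\varepsilon$-perturbation arguments rather than direct equalities. This is a mild topological nuisance rather than a conceptual issue; the substance of the proof lies in the two swap arguments (low-cost agent mimicking a higher-cost winning report for monotonicity, and two winning reports mimicking each other for the constant-payment claim), both of which follow from truthfulness essentially by inspection.
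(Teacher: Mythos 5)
The paper does not prove this lemma at all: it is imported as a known characterization, cited to \citet{myerson81} and \citet{archer01}, so there is no in-paper argument to compare against. Your proposal is the standard two-direction proof of exactly that classical characterization (threshold price $\bar z_i(\vec{c}_{-i})$, constancy of the winning payment, and the two swap arguments), and it is correct in the paper's procurement model; the boundary issue you flag (whether the winning interval is half-open at $\bar z_i$) is harmless since at $c_i=\bar z_i$ truth-telling and any winning deviation both give utility $0$. The only clause you leave untouched, $p_i(\vec{c})=0$ for losers, is a modeling convention in the paper's preliminaries (non-selected agents are not paid), so nothing further is needed there.
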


The next lemma  describes a necessary trade-off that every truthful and budget-feasible deterministic mechanism $M^d$ must make for canonical instances.

\begin{lemma}\label{lemma:profile-blocking}
    Fix $c \in (0, B]$ and $c' \in [0, c)$. Consider two canonical instances with cost profiles $\left(c, B-c \right)$ and $\left( c', B-c' \right)$. For every mechanism $(A, \vec{p}) \in \mathcal{M}^d$ it holds that $v\left(A\left( c, B-c \right)\right) + v\left(A\left(c', B-c' \right)\right) \leq 3$.
\end{lemma}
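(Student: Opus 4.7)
The plan is to argue by contradiction. Since each canonical instance has $v(\{1\}) = v(\{2\}) = 1$ and $v$ is additive, the only way the claim can fail is if $v(A(c, B-c)) = v(A(c', B-c')) = 2$, i.e., both profiles have $A = \{1,2\}$. I will derive a violation of budget feasibility at a carefully chosen auxiliary profile.

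First, I would pin down payments on both original profiles using only individual rationality and budget feasibility: when $A(c, B-c) = \{1,2\}$, IR gives $p_1 \geq c$ and $p_2 \geq B - c$, while budget feasibility gives $p_1 + p_2 \leq B$. Combining these, $p_1(c, B-c) = c$ and $p_2(c, B-c) = B - c$; analogously $p_1(c', B-c') = c'$ and $p_2(c', B-c') = B - c'$.

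Next, I would translate these tight payments into sharp allocation facts via the payment identity of \Cref{lemma:myerson}. The equality $p_1(c, B-c) = c$ forces $1 \notin A(z, B-c)$ for every $z > c$, and the equality $p_2(c', B-c') = B - c'$ forces $2 \notin A(c', z')$ for every $z' > B - c'$. These are the only facts beyond IR/budget-feasibility that I need.

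The key step is then to examine the hybrid profile $(c', B-c)$. Since $c' < c$ and $B - c < B - c'$, applying the monotonicity condition of \Cref{lemma:myerson} to each agent separately (agent $1$'s cost drops from $c$ to $c'$ with agent $2$'s cost held at $B-c$; agent $2$'s cost drops from $B - c'$ to $B - c$ with agent $1$'s cost held at $c'$) yields $A(c', B-c) = \{1,2\}$. Applying the payment identity at $(c', B-c)$ together with the sharp non-membership facts above, the thresholds must be $p_1(c', B-c) = c$ and $p_2(c', B-c) = B - c'$. Their sum equals $B + (c - c') > B$, contradicting budget feasibility. Hence $v(A(c, B-c)) + v(A(c', B-c')) \leq 3$.

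The main obstacle is identifying the right auxiliary profile; once one pivots to $(c', B-c)$, the monotonicity and payment-identity ingredients of Myerson's characterization combine mechanically to force a budget violation. Everything else is bookkeeping with the additive two-agent valuation.
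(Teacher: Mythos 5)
Your proof is correct and follows essentially the same route as the paper: contradiction via the hybrid profile $(c', B-c)$, monotonicity to force $A(c', B-c)=\{1,2\}$, and the payment identity to exceed the budget. The only difference is that you first pin down the exact payments at the original profiles via individual rationality and budget feasibility to get equalities $p_1(c',B-c)=c$ and $p_2(c',B-c)=B-c'$, whereas the paper skips this and simply lower-bounds the suprema by $c$ and $B-c'$ directly from $1\in A(c,B-c)$ and $2\in A(c',B-c')$ — both yield the same contradiction.
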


\begin{proof}
Toward a contradiction, suppose there exists a mechanism $(A, \vec{p}) \in \mathcal{M}^d$ for which $v(A( c, \allowbreak B-c ))$ $+ v\left(A\left(c', B-c' \right)\right) = 4$. This implies that $A\left(c, B-c \right) = A\left(c', B-c' \right) = \{1, 2\}$. Consider the profile $(c', B-c)$ and note that $c' < c$. By the monotonicity of $(A, \vec{p})$ (property 1 of \Cref{lemma:myerson}), and since $1 \in A(c, B-c)$ holds by assumption, it must be that $1 \in A(c', B-c)$. By the same argument, $2 \in A(c', B-c)$ holds, since $2 \in A(c', B-c')$ is true. We thus conclude that $A(c', B-c)=\{1,2\}$. Moreover,
\begin{align*}
      p_1 (c', B-c)  + p_2(c', B-c)  & =  \sup\left\{ z \geq c' \mid 1 \in A(z,B-c) \right\} +    
       \sup\left\{ z \geq B - c \mid  2 \in A(c',z) \right\} \\
      & \geq c + B-c' >B.
\end{align*}
The first equality follows from the payment identity of \Cref{lemma:myerson}. However, this is a contradiction as $(A, \vec{p})$ is, by assumption, budget-feasible. The lemma follows.
\end{proof}

As we prove in \Cref{lemma:profile-blocking}, no deterministic mechanism in $\mathcal{M}^d$ can select \textit{both} agents in \textit{two} red points, as it would be forced to select both agents in their common projection (the corresponding black point) by Myerson's characterization (\Cref{lemma:myerson}) and violate budget feasibility.
In \Cref{lemma:ratio-P-ub}, we derive an upper bound on the expected approximation ratio of the distribution $P(k)$ for every parameter $k$ and every deterministic mechanism in $\mathcal{M}^d$ which is truthful and budget-feasible.

\begin{restatable}{lemma}{lemmaratiopub}\label{lemma:ratio-P-ub}
Fix an integer $k \geq 2$. For every deterministic mechanism $(A, \vec{p}) \in \mathcal{M}^d$ it holds that
\begin{equation*}
    \EX_{\vec{c} \sim P(k)}\left[\frac{v\left(A(\vec{c})\right)}{v\left(S^*(\vec{c})\right)}\right] \leq \frac{1}{2} \cdot \left(1+\frac{1}{k} \right).
\end{equation*}
\end{restatable}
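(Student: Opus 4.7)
The plan is to reduce the bound to a short pigeonhole-style argument built directly on top of \Cref{lemma:profile-blocking}. First I would observe that every profile $\vec{c}$ in the support of $P(k)$ satisfies $c_1 + c_2 = B$, so both agents simultaneously fit within the budget; hence $S^*(\vec{c}) = \{1,2\}$ and $v(S^*(\vec{c})) = 2$ by additivity with $v(\{1\}) = v(\{2\}) = 1$. Consequently, the expectation I want to upper bound simplifies to
\[ \EX_{\vec{c} \sim P(k)}\left[\frac{v(A(\vec{c}))}{v(S^*(\vec{c}))}\right] = \frac{1}{2k}\sum_{i=1}^{k} v\bigl(A(iB/k, (k-i)B/k)\bigr). \]

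Next I would denote $V_i := v(A(iB/k, (k-i)B/k))$ and note that $V_i \in \{0,1,2\}$, since $A(\vec{c}) \subseteq \{1,2\}$ and $v$ is additive with unit singleton values. All $k$ profiles in the support have pairwise distinct first coordinates $iB/k$, so for any two indices $i \neq j$ the hypothesis of \Cref{lemma:profile-blocking} is satisfied (take $c$ to be the larger first coordinate and $c'$ the smaller one), yielding $V_i + V_j \leq 3$.

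The key observation then is that, because each $V_i$ is an integer in $\{0,1,2\}$, the pairwise bound $V_i + V_j \leq 3$ forces at most one of the $V_i$'s to equal $2$: otherwise two of them would sum to $4$, contradicting \Cref{lemma:profile-blocking}. All remaining indices must satisfy $V_i \leq 1$. Summing then gives $\sum_{i=1}^{k} V_i \leq 2 + (k-1) = k+1$, and dividing by $2k$ yields the claimed bound $\frac{1}{2}\bigl(1 + 1/k\bigr)$.

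The argument is quite short once the right viewpoint is adopted, so I do not anticipate a serious obstacle. The only substantive step is recognizing that the pairwise ``value-$3$ cap'' coming from \Cref{lemma:profile-blocking}, combined with the integrality constraint $V_i \in \{0,1,2\}$, collapses the sum over $k$ distinct profiles into at most one term equal to $2$ and at most $k-1$ terms equal to $1$; everything else is arithmetic.
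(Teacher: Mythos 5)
Your proposal is correct and follows essentially the same route as the paper's own proof: both reduce the expectation to a sum over the $k$ support profiles, invoke \Cref{lemma:profile-blocking} pairwise to conclude that at most one profile can attain value $2$ while all others attain value at most $1$, and then compute $\frac{1}{2k}(2 + (k-1)) = \frac{1}{2}(1+\frac{1}{k})$. The only cosmetic difference is that you make the integrality of the $V_i$'s and the verification of the lemma's hypothesis (distinct first coordinates) explicit, which the paper leaves implicit.
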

\begin{proof}
Let $Q$ be the profiles in the support of $P(k)$. Observe that, by definition, for every 
$(iB/k, (k - i)B/k) \in Q$,
it holds that $iB/k + (k - i)B/k = B$.
We can apply \Cref{lemma:profile-blocking} for any two different profiles $(c_1, c_2) \in Q$ and $(c'_1, c'_2) \in Q$, obtaining that $v\left(A\left( c_1, c_2 \right)\right) + v\left(A\left(c'_1, c'_2 \right)\right) \leq 3$. Thus, there can be \emph{at most} one profile $\vec{c}^\star \in Q$ with $v\left(A\left( \vec{c}^\star \right)\right) = 2$. Furthermore, for the remaining profiles $\vec{c} \in Q \setminus \{\vec{c}^\star\}$, it must be that $v\left(A\left( \vec{c}\right)\right) \le 1$.
We conclude that
    \begin{align*}
        \EX_{\vec{c} \sim P(k)}\left[\frac{v\left(A(\vec{c})\right)}{v\left(S^*(\vec{c})\right)}\right] &\leq \frac{1}{k} \cdot \frac{2}{2} + \sum_{\vec{c} \in Q \setminus \{\vec{c}^\star\}}\frac{1}{k} \cdot \frac{1}{2}
        =\frac{1}{2} + \frac{1}{2k},
    \end{align*}
where the equality follows since $|Q \setminus \{\vec{c}^*\}| = k-1$ by definition.
\end{proof}

We can now prove \Cref{thm:randomized-lb-2}.

\begin{proofof}{Proof of \Cref{thm:randomized-lb-2}}
Toward a contradiction, suppose that there exists some $\varepsilon >0$ and a $(2-\varepsilon)$-consistent randomized mechanism $M =(A^r, \vec{p}^r) \in \mathcal{M}$. Let $k= \ceil*{{2}/{\varepsilon}}$.

Recall that our canonical instances are augmented by a prediction $\hat{S} = \{1,2\}$ of the optimal solution. We use $C(\hat{S})$ to refer to the class of canonical instances for which the prediction is perfect. Formally, $C(\hat{S}) = \{\vec{c} \in [0,B]^2 \mid \hat{S} = S^*(\vec{c})\}$. By the assumed consistency guarantee of $M$, we get
\begin{equation*}
    \forall \vec{c} \in C(\hat{S}), \quad \EX\left[v(A^r(\vec{c})) \right] \geq \frac{1}{2 - \varepsilon} \cdot v\left(S^*(\vec{c}) \right).
\end{equation*}
Therefore, we obtain 
\begin{equation}\label{eq:lb-proof-conclusion}
    \frac{1}{2-\varepsilon} \leq \!\min_{\vec{c} \in C(\hat{S})} \frac{\EX\left[v(A^r(\vec{c})) \right]}{v(S^*(\vec{c}))} = \!\min_{\vec{c} \in C(\hat{S})} \!\EX\left[\frac{v(A^r(\vec{c}))}{v(S^*(\vec{c}))} \right] 
    \leq \max_{(A,\vec{p}) \in \mathcal{M}^d} \EX_{\vec{c} \sim P(k)}\!\left[\frac{v(A(\vec{c}))}{v(S^*(\vec{c}))} \right] \leq \frac{1}{2}\Big(1+\frac{1}{k}\Big).
\end{equation}

The first equality is due to the linearity of expectation. The second inequality follows by applying \Cref{lemma:yao} for the randomized mechanism $M$ and the  distribution $P(k)$, since all cost profiles in the support of $P(k)$ are members of $C(\hat{S}) = C(\{1,2\})$, i.e., the prediction $\{1,2\}$ is perfect for these instances. The last inequality follows from \Cref{lemma:ratio-P-ub}.

Using that $k = \ceil*{{2}/{\varepsilon}}$, we expand \eqref{eq:lb-proof-conclusion} to obtain:
\begin{equation*}
    \frac{1}{2-\varepsilon} \leq \frac{1}{2} \Big(1+\frac{1}{k}\Big)= 
    \frac{1}{2} +\frac{1}{2 \ceil*{{2}/{\varepsilon}}}\leq \frac{2+\varepsilon}{4}\,,
\end{equation*}
which is only possible when $\varepsilon = 0$. Since $\varepsilon >0$, this is a contradiction, completing the proof.
\end{proofof}
\section{Conclusions}
In this work we initiate the study of budget-feasible mechanisms in environments with predictions for submodular valuation functions (both monotone and non-monotone). For most of the paper (\Cref{sec:mono-submodular,sec:mono-sm-sampling,sec:nonmono-submodular}), we focus on the online setting with random (secretary) arrivals, for which we propose two families of tunable mechanisms that incorporate, as a prediction, the value of the optimal offline solution. Our results show that mechanisms augmented with this type of relatively weak prediction can substantially boost the performance of truthful and budget-feasible mechanisms for a wide class of procurement auctions. At the same time, through our lower bound in \Cref{sec:lower-bound}, we obtain strong evidence that a class of natural predictions on the output are largely ineffective for the offline version of the problem with additive valuations.

For the positive results of our paper, we analyzed the environment with predictions on the value of the optimal solution. This form of prediction, also suggested by prior work (see, e.g., \citet{balkanski23}), is relatively weak in an information-theoretic sense and can be seen as an easily obtainable aggregation of historical data (see also the discussions in \citet{christodoulou24} on output predictions). Nevertheless, we believe that studying online budget-feasible mechanism design with stronger or just different types of predictions (e.g., predictions on the set of agents that are part of the optimal solution) is an interesting direction for future work. 

A second direction we find interesting is the study of subadditive valuations in online environments, with and without prediction. Recently, in the offline setting, \citet{neogi_et_al:LIPIcs.ITCS.2024.84} showed a constructive approach to achieve a constant approximation for this class of valuations. Since an analogous result for the online setting with secretary arrivals remains elusive, it would be intriguing to understand whether predictions can be used to that end.

Finally, an open problem is to characterize the optimal
consistency--robustness Pareto frontier in the random-order model, complementing our
positive results with matching lower bounds. Even for the underlying
\emph{algorithmic}, non-strategic problems without predictions, unconditional lower
bounds in the random-order (secretary) model are scarce. For instance, the optimal
competitive ratio of the matroid secretary problem is a long-standing open question;
\citet{bahrani2021} have obtained impossibility results for certain classes of natural
algorithms. Moreover, to the best of our knowledge, no lower bounds are known for the
mechanism-design counterparts of these questions. We thus expect that obtaining such
lower bounds will require substantially new machinery, and we leave this as an
intriguing direction for future work.

\section*{Acknowledgements}
This work was partially supported by the project MIS 5154714 of the National Recovery and Resilience Plan Greece $2.0$ funded by the European Union under the NextGenerationEU Program. Moreover, this work was partially supported by the framework of the H.F.R.I call “Basic research Financing (Horizontal support of all Sciences)” under the National Recovery and Resilience Plan ``Greece 2.0'' funded by the European Union – NextGenerationEU (H.F.R.I. Project Number: 15877). Finally, it was supported by the Dutch Research Council (NWO) through the Gravitation Project NETWORKS, grant no.~024.002.003 and by the European Union under the EU Horizon 2020 Research and Innovation Program, Marie Sk\l{}odowska-Curie Grant Agreement, grant no.~101034253.

\Urlmuskip=1mu plus 1mu\relax
\bibliography{References}
\bibliographystyle{abbrvnat}
\newpage
\appendix


\section{Missing Material from  \Cref{sec:nonmono-submodular}}\label{app:sec5}

\lemtruthbfnonmono*
\begin{proof}
Regarding universal truthfulness, since the mechanism is a probability distribution on two different mechanisms, it suffices to argue about each of them separately. We will show that it holds for \Cref{mechanism:nonmono-sm-withoutpred} and the proof for \Cref{mechanism:nonmono-sm-pred} is similar. First, \Cref{mechanism:nonmono-sm-withoutpred} with probability $q$, runs Dynkin's, which is obviously truthful.   

Coming now to the second mechanism that is run with probability $1-q$,  fix an arrival sequence, as realized by the random arrival model, and fix also $\xi_1$, the number of agents that determine the set $N_1$.
Note that the agents have no control over the position in which they arrive. Hence, any agent $i \in N_1$ is always rejected by the mechanism, regardless of the cost she declares. 
Moreover, each agent $i \in N_2$ is offered a price $p_i$, which is independent of her declared cost, since agent $i$ has no control over her slot of arrival and thus cannot affect her marginal contribution.

For the sake of contradiction, assume that agent $j\in N_2$ can achieve a better outcome by misreporting a cost $b_j \neq c_j$, while all other agents keep the same reported costs. Suppose first that agent $j$ belongs to the winning set $S$, under the truthful profile.
Then by misreporting, she receives the same payment as long as her declared cost is below the threshold $p_j$ which is still the same as before, since the algorithm runs in exactly the same way up until the time that $j$ arrives. 
If the declared cost is above the threshold, she is rejected by the mechanism and receives a payment of 0. Hence, when $j \in S$, she cannot guarantee a better utility by misreporting her true cost. Suppose now that agent $j$ does not belong to the solution $S$, under truthful reporting. She certainly cannot benefit from reporting a higher cost than $c_j$, as that would still result in rejection by the mechanism.
Additionally, if she reports a lower cost than $c_j$ and she is added to the solution $S$, this means that the reason for rejection before was that the offered payment was less than her true cost $c_j$ (and not because of budget exhaustion). By reporting $b_j < c_j$, given that the payment offered by the mechanism remains the same, this results in negative utility for agent $j$. All of the above leads to the conclusion that no agent can benefit from misreporting her true cost.

For budget feasibility, note first that in the case the mechanism runs Dynkin's algorithm, the mechanism pays exactly $B$. Otherwise, the second inequality in the if-condition in line 13 guarantees that for the solution $S$ returned by the mechanism, $\sum_{i \in S} p_i \leq B$, which ensures the budget feasibility of the solution $S$.

Finally, through the first inequality of the if condition in line 13, individual rationality is ensured, since for any agent $i$ added in $S$, it holds that $p_i\geq c_i$. 
    
\end{proof}

\lemmechanismnonmonowithoutprediction*
\begin{proof}
    
This proof resembles a lot the proof of \Cref{lem:mechanism-without-prediction}, so again in order to avoid repetition we will present the proof briefly. We will use the same extra notation, that was introduced in the proof of \Cref{lem:mechanism-nonmono-with-prediction}. We discriminate between three cases, as we did in all the proofs of performance guarantees. 
\medskip

\noindent\textbf{Case 1:} \underline{Agent $i^*$ has significant value, i.e.: $v(i^*) > \frac{\beta}{z}\cdot \frac{1}{e}\cdot (\frac{1}{2}-\delta)\cdot v(S^*)$.}\\[4pt]
For this case,  with probability $q$ we run Dynkin's algorithm, which gives a  factor of $1/e$ over single-agent solutions. Therefore, independently of what our mechanism does with the remaining probability, the expected value for $v(S)$ will be at least:
\[E[v(S)] \geq \frac{q}{e}\cdot \frac{\beta}{z}\cdot \frac{1}{e}\cdot (\frac{1}{2}-\delta)\cdot v(S^*)=\frac{q\beta(1-2\delta)}{2e^2z}\cdot v(S^*).\]

\noindent For the case that $v(i^*) \leq \frac{\beta}{z}\cdot \frac{1}{e}\cdot (\frac{1}{2}-\delta)\cdot v(S^*)$, we can use Bernstein's inequality (\Cref{thm:bernstein}), as we did in \Cref{lem:Hoeffding} and get that with probability at least $1-2 \exp{\left(-\frac{\delta^2/ 2 }{(\frac{\beta}{z\cdot e}\cdot(\frac{1}{2}-\delta))\cdot(\frac{1}{4}+ \frac{\delta}{3})}\right)}$, it holds that $(\frac{1}{2}-\delta)\cdot v(S^*)\leq v(S_1^*)$, $(\frac{1}{2}-\delta)\cdot v(S^*)\leq v(S_2^*) $, where $\delta$ is a positive constant, that we will set to a certain value later.
We assume that we are in this highly probable event for the remainder of the proof.
\medskip

\noindent\textbf{Case 2:} \underline{Each agent $i\in S_2^*\setminus (S_1\cup S_2)$ rejects the posted price and $v(i^*) \leq \frac{\beta}{z}\cdot \frac{1}{e}\cdot (\frac{1}{2}-\delta)\cdot v(S^*)$.}\\[4pt]
 By using the same analysis as in the proof of \Cref{lem:mechanism-nonmono-with-prediction} case we obtain the same inequality:
    \begin{align*}
        v(S_2^*)\leq 2\cdot v(S_1)+2\cdot v(S_2)+2\sum\limits_{i\in S_2^*\setminus(S_1\cup S_2)} v(i|S_{j(i)}^{t_i})
    \end{align*}
    Since each agent in $S_2^*\setminus(S_1\cup S_2)$ rejects the posted price offered by the mechanism:
    \begin{gather*}
        v(S_2^*)\leq 2\cdot v(S_1) +2\cdot v(S_2)+ 2\sum\limits_{i\in S_2^*\setminus(S_1\cup S_2)} \frac{c_i\cdot \beta \cdot v(T_1)}{B}\implies\\ 2\cdot v(S_1)+2\cdot v(S_2)\geq v(S_2^*)-2\beta\cdot v(S^*),
    \end{gather*}
    where the last inequality due to the fact that $c(S_2^*)\leq B$ and $ v(T_1)\leq v(S^*)$. Therefore by dividing by 4, we get that $\EX[v(S)]\geq \frac{v(S_2^*)}{4}-\frac{\beta\cdot v(S^*)}{2}$, (since $\EX[v(S)]=\frac{v(S_1)}{2}+\frac{v(S_2)}{2}$).

    Combining the above inequality with the fact that $v(S_2^*)\geq(\frac{1}{2}-\delta)\cdot v(S^*) $: 
    \[\EX[v(S)]\geq \left( \frac{1}{8}-\frac{\delta}{4}-\frac{\beta}{2}\right) v(S^*)=\frac{1-2\delta-4\beta}{8}\cdot v(S^*),\] with probability at least $1-2 \exp{\left(-\frac{\delta^2/ 2 }{(\frac{\beta}{z\cdot e}\cdot(\frac{1}{2}-\delta))\cdot(\frac{1}{4}+ \frac{\delta}{3})}\right)}$. 
    \medskip
    
    \noindent\textbf{Case 3:} \underline{There is an agent $i\in S_2^*\setminus (S_1\cup S_2)$ such that $c_i\leq \bar{p}_i$ but $\bar{p}_i >B_{j(i)}$}\\ \underline{and $v(i^*) \leq \frac{\beta}{z}\cdot \frac{1}{e}\cdot (\frac{1}{2}-\delta)\cdot v(S^*)$.}\\[4pt]
    Let $i\in S_2^*\setminus (S_1\cup S_2)$ be the first agent that has this property.
    We start by showing that $B'_{j(i)}\leq\frac{B}{z}$, we have that:
    \begin{align*}
        B'_{j(i)}< \bar{p}_i =\frac{B}{t}(v(S_{j(i)}^{t_i}\cup \{i\})-v(S_{j(i)}^{t_i}))\leq \frac{B\cdot v(i)}{t}\leq \frac{B\cdot v(i^*)}{\beta \cdot v(T_1)} \leq \frac{B}{z}.
    \end{align*}
 The chain of inequalities uses submodularity ($v(i|S_{j(i)})\leq v(i)\leq v(i^*)$) and the assumption of Case~3 under the event $v(S_1^*)\geq (\frac{1}{2}-\delta)\cdot v(S^*)$, which gives $v(i^*)\leq \frac{\beta}{z}\cdot\frac{1}{e}\cdot(\frac{1}{2}-\delta)\cdot v(S^*)\leq \frac{\beta v(T_1)}{z}$.

\noindent So it holds that $B_{j(i)}'\leq \frac{B}{z}$, and thus, $\sum\limits_{k\in S_{j(i)}}p_k\geq B-\frac{B}{z}=\frac{(z-1)\cdot B}{z}$. Hence,
    \begin{align*}
        &\frac{z-1}{z}B\leq \sum\limits_{k\in S_{j(i)}}p_k=\frac{\sum\limits_{k\in S_{j(i)}} v(k|S_{j(i)}^{t_k}) \cdot B}{\beta \cdot v(T_1)}\\ & \Rightarrow v(S_{j(i)})\geq \frac{z-1}{z}\frac{\beta}{e} v(S_1^*)\\ &\Rightarrow \max\{v(S_1),v(S_2)\} \geq  \frac{z-1}{z}\cdot\frac{\beta}{e} v(S_1^*)
    \end{align*}
    Again, combining the last inequality with the fact that $v(S_1^*)\geq (\frac{1}{2}-\delta)\cdot v(S^*)$, with probability at least $1-2 \exp{\left(-\frac{\delta^2/ 2 }{(\frac{\beta}{z\cdot e}\cdot(\frac{1}{2}-\delta))\cdot(\frac{1}{4}+ \frac{\delta}{3})}\right)}$, we get that $\max\{v(S_1),v(S_2)\}\geq \frac{z-1}{z}\cdot\frac{\beta}{e}\cdot(\frac{1}{2}-\delta)\cdot v(S^*)$. So $\EX[X]\geq \frac{z-1}{2\cdot z}\cdot\frac{\beta}{e}\cdot(\frac{1}{2}-\delta)\cdot v(S^*)=\frac{(z-1)\beta(1-2\delta)}{4ez}\cdot v(S^*)$
    \\
Piecing everything together,

\begin{itemize}[leftmargin=*]
    \item For the case that $v(i^*) \leq \frac{\beta}{z}\cdot \frac{1}{e}\cdot (\frac{1}{2}-\delta)\cdot v(S^*)$, with probability $1-q$ we run the mechanism below line 4, which gives the factor \[(1-q) \cdot \tilde{p} \cdot  \min{\left( \frac{1-2\delta-4\beta}{8}, \frac{(z-1)\beta(1-2\delta)}{4ez} \right)},\] where $\tilde{p}=1-2 \exp{\left(-\frac{\delta^2/ 2 }{(\frac{\beta}{z\cdot e}\cdot(\frac{1}{2}-\delta))\cdot(\frac{1}{4}+ \frac{\delta}{3})}\right)}$. 
     \item For the case that $v(i^*) >\frac{\beta}{z}\cdot \frac{1}{e}\cdot (\frac{1}{2}-\delta)\cdot v(S^*)$ , with probability $q$ we run Dynkin's algorithm , which gives the factor
    \begin{equation}
        \frac{q\beta(1-2\delta)}{2e^2z}.\tag*{\qedhere}
    \end{equation}
\end{itemize} 

\end{proof}

\thmapproxnonmonosmtwo*
\begin{proof}

For this proof, we will consider the same three cases as we have done so far, for all the mechanisms.
\medskip

\noindent\textbf{Case 1:} \underline{Agent $i^*$ has significant value, i.e.: $v(i^*) >\frac{1}{z}\cdot ( a\cdot (1-\varepsilon)+ \frac{\beta}{e}\cdot(\frac{1}{k}-\delta))\cdot v(S^*)$.}\\[4pt]
For this case, we will utilize the fact that with probability $q$ we run Dynkin's algorithm, which gives a  factor of $1/e$ over single-agent solutions. Therefore, independently of what our mechanism does with the remaining probability, the expected value for $v(S)$ will be at least:
\[E[v(S)] \geq \frac{q}{e\cdot z}\cdot ( a\cdot (1-\varepsilon)+ \frac{\beta}{e}\cdot(\frac{1}{k}-\delta))\cdot v(S^*)=\frac{q \left(e k\,a(1 - \varepsilon) + \beta (1 - k \delta)\right)}{e^2 z k} \cdot v(S^*).\] 

\noindent Using a similar approach as in \Cref{lem:Hoeffding}, for the case that $v(i^*) \leq \frac{1}{z}\cdot ( a\cdot (1-\varepsilon)+  \frac{\beta}{e}\cdot(\frac{1}{k}-\delta))\cdot v(S^*)$, by applying Bernstein's inequality (\Cref{thm:bernstein}), we get that $(\frac{1}{k}-\delta)\cdot v(S^*)\leq v(S_1^*)$, $(\frac{k-1}{k}-\delta)\cdot v(S^*)\leq v(S_2^*) $, with probability greater or equal to $1-2\cdot \exp{\left(-\frac{\delta^2/ 2 }{(\frac{k-1}{k^2}+\frac{\delta}{3})\cdot\frac{1}{z}\cdot ( a\cdot (1-\varepsilon)+\frac{\beta}{e}\cdot (\frac{1}{k}-\delta))}\right)}$.
\medskip

\noindent\textbf{Case 2:} \underline{Each agent $i\in S_2^*\setminus (S_1\cup S_2)$ rejects the posted price and}\\ \underline{$v(i^*) \leq\frac{1}{z}\cdot ( a\cdot (1-\varepsilon)+ \frac{\beta}{e}\cdot(\frac{1}{k}-\delta))\cdot v(S^*)$.}\\[4pt]
In the same way as we did in the proof of \Cref{lem:mechanism-nonmono-with-prediction}, we obtain the two following inequalities: 
\begin{align*}
    v(S_2\cup S^*)\leq v(S_2)+v(S_1)+\sum\limits_{i\in S_2^*\setminus(S_1\cup S_2)} v(i|S_2^{t_i}), \\ v(S_1\cup S^*)\leq v(S_2)+v(S_1)+\sum\limits_{i\in S_2^*\setminus(S_1\cup S_2)} v(i|S_1^{t_i}).
\end{align*}
Adding them up and dividing by 4, we get:
\begin{align*}
\frac{v(S_1)}{2}+\frac{v(S_2)}{2}\geq \frac{v(S_2^*)}{4}-\frac{1}{2} \sum\limits_{i\in S_2^*\setminus(S_1\cup S_2)} v(i|S_{j(i)}^{t_i}),
\end{align*} 
which combined with the fact that each agent $i\in S_2^*\setminus(S_1\cup S_2)$ rejected the posted price offered by the mechanism and also that $\mathbb{E}[v(S)]=\frac{v(S_1)}{2}+\frac{v(S_2)}{2}$, we get that : 
\begin{align*}
\EX[v(S)]\geq  \frac{v(S_2^*)}{4}-\frac{ a\cdot(1-\varepsilon)\cdot v(S^*)+\beta\cdot v(T_1)}{2}\geq \frac{v(S_2^*)}{4}-\frac{(a\cdot(1-\varepsilon)+\beta)\cdot v(S^*)}{2}.
\end{align*}
Combining with the fact that $(\frac{k-1}{k}-\delta)\cdot v(S^*)\leq v(S_2^*) $, with probability greater or equal to $1-2\cdot \exp{\left(-\frac{\delta^2/ 2 }{(\frac{k-1}{k^2}+\frac{\delta}{3})\cdot\frac{1}{z}\cdot ( a\cdot (1-\varepsilon)+ \frac{\beta}{e}\cdot (\frac{1}{k}-\delta))}\right)}$, we obtain that for this case:
\[\EX[v(S)]\geq \left (\frac{1}{4}\cdot (\frac{k-1}{k}-\delta) -\frac{1}{2}\cdot ( a\cdot (1-\varepsilon)+ \beta)\right) \cdot v(S^*)=\frac{k-1-k\delta-2k(a(1-\varepsilon)+\beta)}{4k}\cdot v(S^*)\]
with probability at least $1-2 \exp{\left(-\frac{\delta^2/ 2 }{(\frac{k-1}{k^2}+\frac{\delta}{3})\cdot\frac{1}{z}\cdot ( a\cdot (1-\varepsilon)+ \frac{\beta}{e}\cdot (\frac{1}{k}-\delta))}\right)}$. \medskip

\noindent\textbf{Case 3:} \underline{There is an agent $i\in S_2^*\setminus (S_1\cup S_2)$ such that $c_i\leq \bar{p}_i$ but $\bar{p}_i >B_{j(i)}$ and also}\\ \underline{$v(i^*) \leq\frac{1}{z}\cdot ( a\cdot (1-\varepsilon)+ \frac{\beta}{e}\cdot(\frac{1}{k}-\delta))\cdot v(S^*)$.}\\[4pt]
Let $i\in S_2^*\setminus (S_1\cup S_2)$ be the first agent that has this property.
    First, we will show that $B'_{j(i)}\leq\frac{B}{z}$, we have that:
    \begin{align*}
        B'_{j(i)}<p_i=\frac{B}{t}(v(S_{j(i)}^{t_i}\cup \{i\})-v(S_{j(i)}^{t_i}))\leq \frac{B\cdot v(i)}{t}\leq \frac{B\cdot v(i^*)}{a\cdot \omega+\beta \cdot v(T_1)} \leq \frac{B}{z}, 
    \end{align*}
The chain of inequalities uses submodularity ($v(i|S_{j(i)})\leq v(i)\leq v(i^*)$) and the assumption of Case~3 under the event that $v(S_1^*)\geq (\frac{1}{2}-\delta)\cdot v(S^*)$, which gives $v(i^*)\leq \frac{1}{z}(a(1-\varepsilon)+\frac{\beta}{e}(\frac{1}{k}-\delta))v(S^*)\leq \frac{a\omega+\beta v(T_1)}{z}$.

\noindent Since $B'\leq \frac{B}{z}$, it means that  $\sum\limits_{k\in S_{j(i)}}p_k\geq B-\frac{B}{z}=\frac{(z-1)\cdot B}{z}$. So,
    \begin{align*}
        &\frac{z-1}{z}B\leq \sum\limits_{k\in S_{j(i)}}p_k=\frac{\sum\limits_{k\in S_{j(i)}} v(k|S_{j(i)}^{t_k}) \cdot B}{a\cdot \omega +\beta \cdot v(T_1)}\\ & \Rightarrow v(S_{j(i)})\geq \frac{z-1}{z}(a\cdot \omega+\frac{\beta}{e} \cdot v(S_1^*))\\ &\Rightarrow \max\{v(S_1),v(S_2)\} \geq  \frac{z-1}{z}\cdot(a\cdot \omega+\frac{\beta}{e}\cdot  v(S_1^*)).
    \end{align*}
    Again, combining the last inequality with the fact that $v(S_1^*)\geq (\frac{1}{k}-\delta)\cdot v(S^*)$, we get that $\max\{v(S_1),v(S_2)\} \allowbreak \geq \frac{z-1}{z}\cdot(a\cdot (1-\varepsilon)+\frac{\beta}{e}\cdot(\frac{1}{k}-\delta))\cdot v(S^*)$. So, 
    \[
      \EX[X]\geq \frac{z-1}{2\cdot z}\cdot(a\cdot (1-\varepsilon)+\frac{\beta}{e}\cdot(\frac{1}{k}-\delta))\cdot v(S^*)=\frac{(z - 1)\left(e k\,a(1 - \varepsilon) + \beta (1 - k\delta)\right)}{2 e z k}\cdot v(S^*),  
   \]
    with probability at least $1-2\cdot \exp{\left(-\frac{\delta^2/ 2 }{(\frac{k-1}{k^2}+\frac{\delta}{3})\cdot\frac{1}{z}\cdot ( a\cdot (1-\varepsilon)+ \frac{\beta}{e}\cdot (\frac{1}{k}-\delta))}\right)}$.
    \\
Piecing everything together,

\begin{itemize}[leftmargin=*]
    \item For the case that $v(i^*) \leq \frac{1}{z}\cdot (a\cdot (1-\varepsilon)+ \ \frac{\beta}{e}\cdot(\frac{1}{k}-\delta))\cdot v(S^*)$, with probability $1-q$ we run the mechanism below line 4, which gives the factor \[(1-q) \cdot \tilde{p}\cdot  \min{\left(\frac{k-1-k\delta-2k(a(1-\varepsilon)+\beta)}{4k}, \frac{(z - 1)\left(e k\,a(1 - \varepsilon) + \beta (1 - k\delta)\right)}{2 e z k} \right)},\] where $\tilde{p}=1-2 \exp{\left(-\frac{\delta^2/ 2 }{(\frac{k-1}{k^2}+\frac{\delta}{3})\cdot\frac{1}{z}\cdot ( a\cdot (1-\varepsilon)+\frac{\beta}{e}\cdot (\frac{1}{k}-\delta))}\right)}$. 
    
    \item For the case that $v(i^*) \ge\frac{1}{z}\cdot ( a\cdot (1-\varepsilon)+ \frac{\beta}{e}\cdot(\frac{1}{k}-\delta))\cdot v(S^*)$ , with probability $q$ we run Dynkin's algorithm , which gives the factor
    \begin{equation}
        \frac{q \left(e k\,a(1 - \varepsilon) + \beta (1 - k \delta)\right)}{e^2 z k}. \tag*{\qedhere}
    \end{equation}
\end{itemize} 

\end{proof}


\section{Missing Material from  \Cref{sec:lower-bound}}\label{app:sec6}
\lemmayao*
\begin{proof}
  We use $\supp(D)$ to denote the support of a discrete probability distribution $D$; we extend this notation to randomized mechanisms, since each such mechanism defines a discrete probability distribution over deterministic mechanisms.
  Also, let $\Bar{\vec{c}}$ be a random variable on cost profiles. For every mechanism $(A^r, \vec{p}^r) \in \mathcal{M}$ and every probability distribution $P$ over canonical instances with $\supp(P) \subseteq C$, the following holds:
\begin{align*}
    \min_{\vec{c} \in C}\EX\left[\frac{v\left(A^r(\vec{c}) \right)}{v\left(S^*(\vec{c}) \right)} \right] 
    &\leq \min_{\vec{c} \in \supp(P)}\EX\left[\frac{v\left(A^r(\vec{c}) \right)}{v\left(S^*(\vec{c}) \right)} \right] \\
    &= \min_{\vec{c} \in \supp(P)} \sum_{(A, \vec{p}) \in \supp((A^r, \vec{p}^r))}\Pr[A^r=A] \cdot \frac{v\left(A(\vec{c}) \right)}{v\left(S^*(\vec{c}) \right)} \\
    &\leq \sum_{\vec{c} \in \supp(P)}\Pr[\Bar{\vec{c}}= \vec{c}] \cdot \sum_{(A, \vec{p}) \in \supp((A^r, \vec{p}^r))}\Pr[A^r=A] \cdot \frac{v\left(A(\vec{c}) \right)}{v\left(S^*(\vec{c}) \right)} \\
    &= \sum_{(A, \vec{p}) \in \supp((A^r, \vec{p}}^r))\Pr[A^r=A] \cdot \sum_{\vec{c} \in \supp(P)}\Pr[\Bar{\vec{c}}= \vec{c}] \cdot \frac{v\left(A(\vec{c}) \right)}{v\left(S^*(\vec{c}) \right)} \\
    &= \sum_{(A, \vec{p}) \in \supp((A^r, \vec{p}^r))}\Pr[A^r=A] \cdot \EX_{\Bar{\vec{c}} \sim P}\left[\frac{v\left(A(\Bar{\vec{c}}) \right)}{v\left(S^*(\Bar{\vec{c}}) \right)}\right] \\
    &\leq \max_{(A, \vec{p}) \in \supp((A^r, \vec{p}^r))}\EX_{\Bar{\vec{c}} \sim P}\left[\frac{v\left(A(\Bar{\vec{c}}) \right)}{v\left(S^*(\Bar{\vec{c}}) \right)}\right] \\
    &\leq \max_{(A, \vec{p}) \in \mathcal{M}^d}\EX_{\Bar{\vec{c}} \sim P}\left[\frac{v\left(A(\Bar{\vec{c}}) \right)}{v\left(S^*(\Bar{\vec{c}}) \right)}\right].
\end{align*}
The second and third inequalities follow from the standard fact that, for a set of $k$ numbers $\{x_1, \dots, x_k\}$ and non-negative weights $w_1, \dots, w_k$ with $\sum_{i=1}^k w_i = 1$, it holds that $
\min_{i \in [k]} x_i \leq \sum_{i=1}^k w_i x_i \leq \max_{i \in [k]} x_i$.
Finally, the last inequality follows from the fact that, for any mechanism $(A^r, \vec{p}^r) \in \mathcal{M}$, it holds that $\supp(A^r, \vec{p}^r) \subseteq \mathcal{M}^d$. 
\end{proof}

\end{document}